\newtheorem{lemma}{Lemma}
\newtheorem{theorem}{Theorem}
\newtheorem{corollary}{Corollary}
\newcommand{\Rule}{\mathcal{R}}
\newcommand{\Lang}{\mathcal{L}}
\newcommand{\fail}{\mathsf{fail}}
\newcommand{\n}[2][i]{\left<#2\right>_{#1}}
\newcommand{\dv}[1]{d_{c,i}\left(#1\right)}
\newcommand{\nl}{\mathord{!}}
\newcommand{\str}[1][s]{\mathbf{#1}}
\newcommand{\mt}{\boldsymbol{\mathrm{\epsilon}}}
\newcommand{\substr}[2][\str]{#1\!\left[#2\right]}
\newcommand{\chs}[1]{\mathtt{#1}}
\newcommand{\strc}[4][s]{\chs{#1}_{#2} \chs{#1}_{#3} \cdots \chs{#1}_{#4}}
\newcommand{\eos}{\#}
\newcommand{\genseq}{\alpha\beta\left[\beta_{i_1}\cdots\beta_{i_k}\right]}
\newcommand{\ie}{\textit{i.e.}}
\newcommand{\eg}{\textit{e.g.}}
\newcommand{\etal}{\textit{et~al.}}
\newcommand{\lbl}[0]{\footnotesize}
\title{Simplified Parsing Expression Derivatives}
\author{Aaron Moss\\University of Waterloo\\\texttt{a3moss@uwaterloo.ca}}
\begin{document}

\twocolumn[
\begin{@twocolumnfalse}
\maketitle

\begin{abstract}
This paper presents a new derivative parsing algorithm for parsing expression 
grammars; this new algorithm is both simpler and faster than the existing 
parsing expression derivative algorithm presented by Moss\cite{Mos17}. This 
new algorithm improves on the worst-case space and runtime bounds of 
the previous algorithm by a linear factor, as well as decreasing runtime by 
about half in practice. A proof of correctness for the new algorithm is 
included in this paper, a result not present in earlier work.
\end{abstract}

\vspace{8mm}

\end{@twocolumnfalse}
]

\section{Introduction}

A derivative parsing algorithm for parsing expression grammars (PEGs) was first 
published by Moss\cite{Mos17}; this paper \linebreak presents a simplified and 
improved algorithm, as well as a practical comparison of the two algorithms 
both to each other and to other PEG parsing methods. This new algorithm 
preserves or improves the performance bounds of the earlier algorithm, 
trimming a linear factor off the worst-case time and space bounds, 
while preserving the linear time and constant space bounds for the broad class 
of ``well-behaved'' inputs defined in \cite{Mos17}. As an additional 
contribution to the theory of parsing expression grammars, this work includes 
a formal proof of correctness for its algorithm, a result left as conjecture 
by authors of previous parsing expression derivative 
algorithms\cite{Mos17,GJWE18}. This paper also presents an extension of the 
concept of nullability from existing work on derivative 
parsing\cite{Brz64,MDS11} to PEGs, proving some useful properties of the given 
presentation while respecting Ford's\cite{For04} undecidability results.

\section{Parsing Expression Grammars} \label{defn-sec}

Parsing expression grammars are a language formalism similar in power to the 
more familiar context-free grammars (CFGs). Ford\cite{For04} has shown that 
any $\mathrm{LR}(k)$ language can be represented as a PEG, and that there are 
also some non-context free languages for which PEGs exist 
(\eg{} $a^n b^n c^n$). It is conjectured, however, that context-free languages 
exist that cannot be recognized by a PEG. Any such language would need to take 
advantage of the possible ambiguity in CFG parsing since PEGs are unambiguous 
by definition, admitting no more than one parse tree for any combination of 
grammar and input.

PEGs are a formalization of recursive-descent parsing with limited backtracking 
and infinite lookahead; (\ref{expr-eqn}) provides definitions of the 
fundamental parsing expressions. $a$ is a \emph{character literal}, matching 
and consuming a single character of input; $\varepsilon$ is the 
\emph{empty expression} which always matches without consuming any input, while 
$\varnothing$ is the \emph{failure expression}, which never matches. $A$ is a 
\emph{nonterminal}, which is replaced by its corresponding parsing expression 
$\Rule(A)$ to provide recursive structure in the formalism. The 
\emph{negative lookahead expression} $\nl\alpha$ provides much of the unique 
power of PEGs, matching only if its subexpression $\alpha$ does not match, but 
consuming no input (the \emph{positive lookahead expression} $\&\alpha$ can be 
expressed as $\nl\nl\alpha$). The \emph{sequence expression} $\alpha\beta$ 
matches $\alpha$ followed by $\beta$, while the \emph{alternation expression} 
$\alpha/\beta$ matches either $\alpha$ or $\beta$. Unlike the unordered choice 
in CFGs, if its first alternative $\alpha$ matches, an alternation expression 
never backtracks to attempt its second alternative $\beta$; this 
\emph{ordered choice} is responsible for the unambiguous nature of PEG parsing.

	\begin{equation} \label{expr-eqn}
	\begin{split}
	a(\str)            & = \begin{cases}	\str'	& \str = \chs{a}\,\str'								\\
											\fail	& \text{otherwise} 						\end{cases}	\\
	\varepsilon(\str)  & = \str																			\\
	\varnothing(\str)  & = \fail																		\\
	A(\str)            & = (\Rule(A))(\str)																\\
	\nl\alpha(\str)    & = \begin{cases}	\str	& \alpha(\str) = \fail								\\
											\fail	& \text{otherwise}						\end{cases}	\\
	\alpha\beta(\str)  & = \begin{cases}	\str''	& \alpha(\str) = \str' \wedge \beta(\str') = \str''	\\
											\fail	& \text{otherwise}						\end{cases}	\\
	\alpha/\beta(\str) & = \begin{cases}	\str'	& \alpha(\str) = \str'								\\
											\str''	& \alpha(\str) = \fail \wedge \beta(\str) = \str''	\\
											\fail	& \text{otherwise}						\end{cases}
	\end{split}
	\end{equation}

A na\"{i}ve recursive implementation of (\ref{expr-eqn}) 
may run in exponential time and linear space with respect to the input string, 
though Ford\cite{For02} has shown that a memoized \emph{packrat} implementation 
runs in linear time and space for all grammars and inputs. This linear time 
bound also suggests that some context-free languages cannot be recognized by a 
PEG, as the best known general-purpose CFG parsing algorithms have cubic 
worst-case time.

Formally, a parsing expression grammar $\mathcal{G}$ is the tuple 
$(\mathcal{N}, \mathcal{X}, \Sigma, \Rule, \sigma)$, where $\mathcal{N}$ is the 
set of nonterminals, $\mathcal{X}$ is the set of parsing expressions, $\Sigma$ 
is the input alphabet, $\Rule : \mathcal{N} \rightarrow \mathcal{X}$ maps each 
nonterminal to its corresponding parsing expression, and $\sigma \in \mathcal{X}$ 
is the start expression to parse. Each parsing expression $\varphi \in \mathcal{X}$ 
is a function $\varphi: \Sigma^* \rightarrow \Sigma^* \cup \{ \fail \}$, $\fail \notin \Sigma^*$. 
The \emph{language} $\Lang(\varphi)$ accepted by a parsing expression $\varphi$
is the set of strings matched by that parsing expression; precisely, 
$\Lang(\varphi) = \left\lbrace \str \in \Sigma^* : \exists \str' \in \Sigma^*, \varphi(\str) = \str' \right\rbrace$.
$\varphi$ is said to \emph{match} $\str$ if $\str \in \Lang(\varphi)$.
For some string $\str = \strc{i}{i+1}{n} \in \Sigma^*$, if 
$\varphi(\str) = \str' \in \Sigma^*$, $\str'$ is always a suffix 
$\strc{j}{j+1}{n}$ of $\str$; such a suffix is denoted $\substr{j}$. 
This suffix may be the entire string ($j = i$), a strict substring ($j > i$), 
or even the empty string $\mt$ (distinct from the empty expression 
$\varepsilon$). Note that for some of the new derivative expressions 
introduced in Section~\ref{deriv-sec}, $j < i$, \ie{}, $\str$ is a suffix of 
$\substr{j}$.

	\begin{equation} \label{nu-eqn}
	\begin{split}
		\nu(a)				& = \bot							\\
		\nu(\varepsilon)	& = \top							\\
		\nu(\varnothing)	& = \bot							\\
		\nu(A)				& = \nu(\Rule(A))					\\
		\nu(\nl\alpha)		& = \bot							\\
		\nu(\alpha\beta)	& = \nu(\alpha) \land \nu(\beta)	\\
		\nu(\alpha/\beta)	& = \nu(\alpha) \lor \nu(\beta)
	\end{split}
	\end{equation}

	\begin{equation} \label{lambda-eqn}
	\begin{split}
		\lambda(a)				& = \bot									\\
		\lambda(\varepsilon)	& = \top									\\
		\lambda(\varnothing)	& = \bot									\\
		\lambda(A)				& = \lambda(\Rule(A))						\\
		\lambda(\nl\alpha)		& = \top									\\
		\lambda(\alpha\beta)	& = \lambda(\alpha) \land \lambda(\beta)	\\
		\lambda(\alpha/\beta)	& = \lambda(\alpha) \lor \lambda(\beta)
	\end{split}
	\end{equation}

A parsing expression $\varphi$ is \emph{nullable} if it matches any string; 
that is, $\Lang(\varphi) = \Sigma^*$. Since 
parsing expressions match prefixes of their input, nullable expressions are 
those that match the empty string $\mt$ without constraining the rest of 
the input. Not all expressions that may match without consuming input 
(\ie{}, match $\mt$) are nullable, though; for instance, $\nl a$ \linebreak matches 
$\mt$, but not $\chs{abc}$. I define a \emph{weakly nullable} parsing 
expression $\varphi$ as one where $\mt \in \Lang(\varphi)$. 
(\ref{nu-eqn}) and (\ref{lambda-eqn}) define a nullability predicate $\nu$ and 
a weak nullability predicate $\lambda$. Both predicates can be computed by 
iteration to a fixed point.

It is worth noting that if evaluation of $(\Rule(A))(\str)$ includes a 
left-recursive call to $A(\str)$, parsing expression evaluation will never 
terminate; for this reason the behaviour of parsing expressions including 
left-recursion is undefined. Repetition of an expression can be accomplished 
right-recursively with a nonterminal of the form 
$R := \alpha\:R\;/\;\varepsilon$, generally written $\alpha*$. Fortunately, lack 
of left-recursion is easily structurally verified. The left-expansion function 
$LE$ defined in (\ref{le-eqn}) defines the set of immediate sub-expressions 
possibly left-recursively expanded by evaluation of a parsing expression. 
The left-expansion of an expression is a subset of its subexpressions $SUB$, 
defined in (\ref{sub-eqn}). The recursive left-\linebreak{}expansion and subexpression 
functions can be computed by iteration to a fixed point as 
$LE^+(\varphi) = LE(\varphi) \cup_{\gamma \in LE(\varphi)} LE^+(\gamma)$ and 
$SUB^+(\varphi) = SUB(\varphi) \cup_{\gamma \in SUB(\varphi)} SUB^+(\gamma)$. 
A parsing expression $\varphi$ is \emph{well-formed} 
if neither $\varphi$ nor any of its subexpressions left-recursively expand 
themselves; that is, 
$\forall \gamma \in \{\varphi\} \cup SUB^+(\varphi), \gamma \notin LE^+(\gamma)$. 
This definition of well-formed is equivalent to the condition for a well-formed 
grammar introduced by Ford\cite[\S~3.6]{For04}.

	\begin{equation} \label{sub-eqn}
	\begin{split}
		SUB(a)				& = \{\}					\\
		SUB(\varepsilon)	& = \{\}					\\
		SUB(\varnothing)	& = \{\}					\\
		SUB(A)				& = \lbrace\Rule(A)\rbrace	\\
		SUB(\nl\alpha)		& = \{\alpha\}				\\
		SUB(\alpha\beta)	& = \{\alpha, \beta\}		\\
		SUB(\alpha/\beta)	& = \{\alpha, \beta\}
	\end{split}
	\end{equation}

	\begin{equation} \label{le-eqn}
	\begin{split}
		LE(a)				& = \{\}					\\
		LE(\varepsilon)		& = \{\}					\\
		LE(\varnothing)		& = \{\}					\\
		LE(A)				& = \lbrace\Rule(A)\rbrace	\\
		LE(\nl\alpha)		& = \{\alpha\}				\\
		LE(\alpha\beta)		& = \begin{cases}	\{\alpha,\beta\}	& \text{if } \lambda(\alpha)	\\
												\{\alpha\} 			& \text{otherwise} \end{cases}	\\
		LE(\alpha/\beta)	& = \{\alpha, \beta\}
	\end{split}
	\end{equation}

The nullability predicate $\nu$ is a conservative under-\linebreak{}approximation of which 
parsing expressions are actually nullable; the main source of imprecision is 
lookahead expressions. For $\mu$ such that $\Lang(\mu) = \{\}$, 
$\Lang(\nl\mu) = \Sigma^*$ but $\nu(\nl\mu) = \bot$; however, Ford\cite{For04} 
showed that it is undecidable whether the language of an arbitrary parsing 
expression is empty, precluding a precise description of nullability with 
respect to lookahead expressions. As shown in Section~\ref{proof-sec}, all 
parsing expressions $\eta$ for which $\nu$ holds do match all of 
$\Sigma^*$ (Theorem~\ref{nbl-thm}), but for the converse, all that is shown is 
the weaker statment that $\lambda$ holds for any expression which matches the 
empty string (Theorem~\ref{look-thm}). The author conjectures that any significantly 
stronger statement is precluded by Ford's undecidability results.

It is sometimes useful to assume that all parsing expressions are in a 
consistent simplified form. None of the simplification rules in 
Table~\ref{simpl-table} change the result of the parsing expressions, as can be 
easily verified by consulting (\ref{expr-eqn}), while they have the useful 
properties of reducing any expression which is structurally incapable of 
matching to $\varnothing$ and of trimming unreachable or redundant 
subexpressions from sequence and alternation expressions.

\begin{table}[h]
	\caption[Simplification rules]{Simplification rules; $\nu(\eta) = \top$}
	\label{simpl-table}
	\centering
	\begin{tabular}{rr@{$\;\equiv\;$}lrr@{$\;\equiv\;$}lrr@{$\;\equiv\;$}l}
	1.	& $\alpha\varepsilon$	& $\alpha$		&
	5.	& $\alpha/\varnothing$	& $\alpha$		&
	9.  & $\nl\eta$				& $\varnothing$	\\
	2.	& $\varepsilon\beta$	& $\beta$		&
	6.	& $\varnothing/\beta$	& $\beta$		&
	10. & $\nl\varnothing$		& $\varepsilon$	\\
	3.	& $\alpha\varnothing$	& $\varnothing$	&
	7.  & $\eta/\beta$			& $\eta$		&
	11.	& $\nl\nl\nl\alpha$		& $\nl\alpha$	\\
	4.	& $\varnothing\beta$	& $\varnothing$	&
	8.	& $A$					& \multicolumn{4}{l}{\!\!\!$\varnothing$ if $A := \varnothing$}
	\end{tabular}
\end{table}

\section{Derivative Parsing}
\label{deriv-sec}

The essential idea of derivative parsing, first introduced by 
Brzozowski\cite{Brz64}, is to iteratively transform an expression into an 
expression for the ``rest'' of the input. For example, given the expression 
$\gamma = foo/bar/baz$, $d_b(\gamma) = ar/az$, the suffixes that can follow 
$\chs{b}$ in $\Lang(\gamma)$. Once repeated derivatives have been taken for 
every character in the input string, the resulting expression can be checked 
to determine whether or not it represents a match. Existing work shows how to 
compute the derivatives of regular expressions\cite{Brz64}, context-free 
grammars\cite{MDS11}, and parsing expression grammars\cite{Mos17,GJWE18}. This 
paper presents a simplified algorithm for parsing expression derivatives, as 
well as a formal proof of the correctness of this algorithm, an aspect lacking 
from the earlier presentations.

The chief difficulty in creating a derivative parsing algorithm for PEGs is 
that backtracking is a fundamental part of the semantics of PEGs. If one 
alternation branch does not match, the algorithm tries another; similarly, if a 
lookahead expression matches, the algorithm must attempt to continue parsing 
its successor from the original point in the input. However, derivative parsing 
does not backtrack -- a derivative is taken in sequence for each character, and 
all effects of parsing that character must be accounted for in the resulting 
expression. The earlier formulation by Moss\cite{Mos17} of derivative parsing 
for PEGs included a system of ``backtracking generations'' to label possible 
backtracking options for each expression, as well as a complex mapping 
algorithm to translate the backtracking generations of parsing expressions to 
the corresponding generations of their parent expressions. The key observation 
of the simplified algorithm presented here is that an index into the input 
string is sufficient to label backtracking choices consistently across all 
parsing expressions.

	\begin{equation} \label{extra-expr-eqn}
	\begin{split}
	\varepsilon_j(\substr{k})	& = \substr{j+1}													\\
	\nl_j\alpha(\substr{k})		& = \begin{cases}	\substr{j+1}	& \alpha(\substr{k}) = \fail			\\
													\fail			& \text{otherwise}	\end{cases}	\\
	\genseq(\substr{k})			& = \alpha\beta(\substr{k})
	\end{split}
	\end{equation}

In typical formulations\cite{Brz64,MDS11,Mos17}, the derivative $d_c(\varphi)$ 
is a function from an expression $\varphi \in \mathcal{X}$ and a character 
$\chs{c} \in \Sigma$ to a derivative expression $\varphi' \in \mathcal{X}$. 
Formally, $\Lang\left(d_c(\varphi)\right) = \left\{ \str \in \Sigma^* : \chs{c}\,\str \in \Lang(\varphi) \right\}$. 
This paper defines a derivative $d_{c,i}(\varphi)$, adding an index parameter 
$i \in \mathbb{N}$ for the current location in the input string. Additionally, 
certain parsing expressions are annotated with information about their input 
position. $\varepsilon$, which always matches, becomes 
$\varepsilon_j$, a match at index $j$; $\nl\alpha$, a lookahead expression 
which never consumes any characters, becomes $\nl_j\alpha$, a lookahead 
expression at index $j$. Finally, a sequence expression $\alpha\beta$ must 
track possible indices at which $\alpha$ may have stopped consuming characters 
and $\beta$ began to be parsed; to this end, $\alpha\beta$ is annotated with a 
list of \emph{lookahead followers} $\left[ \beta_{i_1} \cdots \beta_{i_k} \right]$, 
where $\beta_{i_j} \in \mathcal{X}$ is the repeated derivative of $\beta$ 
starting at each index $i_j \in \mathbb{N}$ where $\alpha$ may have stopped 
consuming characters. These annotated expressions are formally defined in 
(\ref{extra-expr-eqn}); note that these definitions match or fail under the 
same conditions as those in (\ref{expr-eqn}), but may consume (or un-consume) 
a different portion of the input, as shown in Theorem~\ref{norm-thm}. 
Accompanying extensions of $SUB$ and $LE$ are defined in (\ref{extra-sub-eqn}) 
and (\ref{extra-le-eqn}).

	\begin{equation} \label{extra-sub-eqn}
	\begin{split}
		SUB(\varepsilon_j)	& = \{\}												\\
		SUB(\nl_j\alpha)	& = \{\alpha\}											\\
		SUB(\genseq)		& = \{\alpha, \beta, \beta_{i_1}, \cdots \beta_{i_k}\}
	\end{split}
	\end{equation}

	\begin{equation} \label{extra-le-eqn}
	\begin{split}
		LE(\varepsilon_j)	& = \{\}										\\
		LE(\nl_j\alpha)		& = \{\alpha\}									\\
		LE(\genseq)			& = \{\alpha, \beta_{i_1}, \cdots \beta_{i_k}\}
	\end{split}
	\end{equation}

To annotate parsing expressions with their indices, (\ref{norm-eqn}) 
defines a \emph{normalization function} $\n{ \bullet }$ to annotate parsing 
expressions; derivative parsing of $\varphi$ starts by taking $\n[0]{\varphi}$. 
One useful effect of this normalization function is that all nonterminals in 
the left-expansion of $\varphi$ are replaced by their expansion, and thus 
$\n{\varphi}$ has no recursion in its left-expansion; in particular, structural 
induction over $(\n{\varphi})$ is bounded for well-formed $\varphi$, per 
Theorem~\ref{norm-wf-thm}.

	\begin{equation} \label{norm-eqn}
	\begin{split}
	\n{a}				& = a																			\\
	\n{\varepsilon}		& = \varepsilon_i																\\
	\n{\varnothing}		& = \varnothing																	\\
	\n{A}				& = \n{\Rule(A)}																\\
	\n{\nl\alpha}		& = \nl_i\n{\alpha}																\\
	\n{\alpha\beta}		& = \n{\alpha}\beta\left[\beta_i = \n{\beta} \mbox{ if } \lambda(\beta)\right]	\\
	\n{\alpha/\beta}	& = \n{\alpha}/\n{\beta}													
	\end{split}
	\end{equation}

The nullability functions $\nu$ and $\lambda$ must also be expanded to deal 
with the indices added by the normalization process. To this end, I define two 
functions $match$ and $back$ from $\mathcal{X}$ to $\mathcal{P}(\mathbb{N})$ 
(based on definitions in \cite{Mos17}). $match$ and $back$ may be 
thought of as extended versions of $\nu$ and $\lambda$, respectively, where 
$match$ (resp. $back$) is a non-empty set of indices if $\nu$ (resp. $\lambda$) 
is true; definitions are in (\ref{back-eqn}) and (\ref{match-eqn}) and a proof 
is included with Theorem~\ref{back-match-thm}.

	\begin{equation} \label{back-eqn}
	\begin{split}
		back(a) 				& = \{\}										\\
		back(\varepsilon_i)		& = \{i\}										\\
		back(\varnothing)		& = \{\}										\\
		back(\nl_i\alpha)		& = \{i\}										\\
		back(\genseq)			& = \cup_{j \in [i_1 \cdots i_k]} back(\beta_j)	\\
		back(\alpha/\beta)		& = back(\alpha) \cup back(\beta)
	\end{split}
	\end{equation}

	\begin{equation} \label{match-eqn}
	\begin{split}
		match(a)				& = \{\}										\\
		match(\varepsilon_i)	& = \{i\}										\\
		match(\varnothing)		& = \{\}										\\
		match(\nl_i\alpha)		& = \{\}										\\
		match(\genseq)			& = \cup_{j \in match(\alpha)} match(\beta_j)	\\
		match(\alpha/\beta)		& = match(\beta)
	\end{split}
	\end{equation}

Having defined these necessary helper functions, the derivative step function 
$d_{c,i}$ is defined in (\ref{derivative-eqn}). To test whether some 
input string $\str = \strc{1}{2}{n}$ augmented with an end-of-string 
terminal $\eos \notin \Sigma$ matches a parsing expression $\varphi$, I define 
the string derivative (\ref{str-deriv-eqn}):
	\begin{equation} \label{str-deriv-eqn}
	d_{\str,i}(\varphi) = \left(d_{s_n,i+n} \circ d_{s_{n-1},i+n-1} \circ \cdots \circ d_{s_1,i+1}\right)\left( \varphi \right).
	\end{equation}

$\varphi^{(n)} = d_{\eos,n} \circ d_{\str,0}\left( \n[0]{\varphi} \right)$ can be used to recognize 
$\Lang(\varphi)$: if $\varphi^{(n)} = \varepsilon_j$, then 
$\varphi(\str) = \substr{j}$, otherwise $\varphi(\str) = \fail$. 
This assertion is proven in 
Theorem~\ref{correct-thm}. This process may be short-circuited if some earlier 
derivative resolves to $\varphi_j$ or $\varnothing$, as Lemma~\ref{preserve-lem} 
shows these success and failure results are preserved for the rest of the 
string.

\begin{figure*}
	\centering
	\begin{equation} \label{derivative-eqn}
	\begin{split}
	\dv{a}				& = \begin{cases}	\varepsilon_i			& c = \chs{a}					\\
											\varnothing				& \mbox{otherwise}	\end{cases}	\\
	\dv{\varepsilon_j}	& = \varepsilon_j	\\
	\dv{\varnothing}	& = \varnothing		\\
	\dv{\nl_j\alpha}	& = \begin{cases}	\varnothing				& match(\dv{\alpha}) \neq \{\}				\\
											\varepsilon_j			& \dv{\alpha} = \varnothing					\\
											\nl_j\dv{\alpha}		& \mbox{otherwise}				\end{cases}	\\
	\dv{\genseq}		& = \begin{cases}	\varnothing				& \dv{\alpha} = \varnothing						\\
											\n{\beta}				& \dv{\alpha} = \varepsilon_i \land c \neq \eos	\\
											\dv{\n{\beta}}			& \dv{\alpha} = \varepsilon_i \land c = \eos	\\
											\dv{\beta_j}			& \dv{\alpha} = \varepsilon_j \land j < i		\\
											\dv{\alpha}\beta\left[\beta^\dagger_j : j \in back(\dv{\alpha})\right]
																	& \mbox{otherwise, where } \beta^\dagger_i = \n{\beta}, \beta^\dagger_{j < i} = \dv{\beta_j}	\end{cases} \\
	\dv{\alpha/\beta}	& = \begin{cases}	\dv{\beta}				& \dv{\alpha} = \varnothing										\\
											\dv{\alpha}				& \dv{\beta} = \varnothing \lor match(\dv{\alpha}) \neq \{\}	\\
											\dv{\alpha}/\dv{\beta}	& \mbox{otherwise}									\end{cases}
	\end{split}
	\end{equation}
\end{figure*}

To preserve performance, the derivative step (\ref{derivative-eqn}) should be memoized, with a 
fresh memoization table for each derivative step. With such a table, a single 
instance of $\dv{\varphi}$ can be used for all derivatives of $\varphi$, 
changing the tree of derivative expressions into a directed acyclic graph 
(DAG). The only new expressions added by computation of $\dv{\varphi}$ are 
of the form $\n{\beta}$, for $\beta$ the successor in some sequence expression 
$\alpha\beta$. Given that all such $\beta$ must be present in the original 
grammar, all of these added expressions are of constant size. With memoization, 
this bounds the increase in size of the derivative expression DAG by a constant 
factor for each derivative step. $back$ and $match$ are also memoized.

\section{Proofs} \label{proof-sec}

This paper rectifies the absence of formal rigor in the existing literature on 
parsing expression derivatives: both Moss \cite{Mos17} and the prepublication 
work of Garnock-Jones \etal{}~\cite{GJWE18} leave the correctness of their 
algorithms and predicates as conjecture with some level of experimental 
validation. This work, by contrast, includes proofs of correctness for both 
the nullability predicates and the complete algorithm.

\subsection{Nullability}

The nullability ($\nu$) and weak nullability ($\lambda$) predicates differ only 
in their treatment of lookahead expressions, and as such it is useful to 
observe that $\nu$ is a strictly stronger condition:

\begin{lemma}
	\label{nbl-look-lem}
	For any parsing expression $\varphi$, $\nu(\varphi) \Rightarrow \lambda(\varphi)$.
\end{lemma}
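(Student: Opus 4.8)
The plan is to prove this by structural induction on the parsing expression $\varphi$, comparing the defining equations (\ref{nu-eqn}) and (\ref{lambda-eqn}) case by case. The two predicates are defined by near-identical recursions over the same grammar of expressions, so the natural strategy is to show that in every case, whenever the $\nu$-recursion yields $\top$, the corresponding $\lambda$-recursion also yields $\top$, using the induction hypothesis on subexpressions.

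First I would dispatch the base cases. For $a$ and $\varnothing$ we have $\nu = \bot$, so the implication holds vacuously; for $\varepsilon$ both predicates equal $\top$. These require no induction hypothesis.

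Next I would handle the inductive cases. For the nonterminal $A$, both predicates defer to $\Rule(A)$, so the claim is immediate from the induction hypothesis applied to $\Rule(A)$. For the sequence $\alpha\beta$, we have $\nu(\alpha\beta) = \nu(\alpha)\land\nu(\beta)$ and $\lambda(\alpha\beta)=\lambda(\alpha)\land\lambda(\beta)$; assuming $\nu(\alpha\beta)=\top$ gives $\nu(\alpha)=\nu(\beta)=\top$, whence the induction hypothesis yields $\lambda(\alpha)=\lambda(\beta)=\top$, so $\lambda(\alpha\beta)=\top$. The alternation $\alpha/\beta$ is analogous with $\lor$ in place of $\land$. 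The one case where the definitions genuinely diverge is the negative lookahead $\nl\alpha$: here $\nu(\nl\alpha)=\bot$ while $\lambda(\nl\alpha)=\top$. But precisely because the hypothesis $\nu(\nl\alpha)=\top$ is never satisfied, this case is vacuous and the implication holds trivially.

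This is the only subtlety worth flagging, and it is mild: the lone case where $\nu$ and $\lambda$ disagree is exactly the case where $\nu$ is unconditionally false, so the implication survives. I therefore expect no real obstacle; the proof is a routine mechanical induction, and the main care is simply to confirm that the lookahead case cannot break the implication rather than assuming it is symmetric to the others.
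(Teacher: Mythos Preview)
Your proposal is correct and is precisely the approach the paper takes; the paper's own proof is the one-line ``By cases on definitions of $\nu$ and $\lambda$,'' and your case analysis is simply the explicit unpacking of that sentence. In particular, your observation that the only divergent case $\nl\alpha$ is vacuous because $\nu(\nl\alpha)=\bot$ is exactly the point that makes the case analysis go through.
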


\begin{proof}
By cases on definitions of $\nu$ and $\lambda$.
\end{proof}

The proofs presented in this section rely heavily on the technique of 
structural induction; nonterminals present some difficulty in this approach, 
as they may recursively expand themselves, causing the induction to fail to 
terminate in a base case. The following two lemmas show that this is not an 
issue for well-formed PEGs:

\begin{lemma}
	\label{look-wf-lem}
	For any well-formed parsing expression $\varphi$, the set of parsing 
	expressions examined to compute $\lambda(\varphi)$ is precisely 
	$\{\varphi\} \cup LE^+(\varphi)$.
\end{lemma}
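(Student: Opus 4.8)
The plan is to prove the claim by induction, but ordinary structural induction is unavailable, precisely because a nonterminal $A$ recurses to $\Rule(A)$, which need not be a structural subexpression of $A$. Instead I would induct on the one-step left-expansion relation $\gamma \mapsto LE(\gamma)$ restricted to $\{\varphi\} \cup LE^+(\varphi)$. Well-formedness supplies exactly the missing ingredient: since $\gamma \notin LE^+(\gamma)$ for every $\gamma \in \{\varphi\} \cup SUB^+(\varphi) \supseteq \{\varphi\} \cup LE^+(\varphi)$, this relation is acyclic and hence well-founded, so the recursive examination of subexpressions terminates and $\{\varphi\} \cup LE^+(\varphi)$ is finite. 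Establishing this termination is the first step, and it is what licenses the inductive argument used here and in the later proofs.

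With a well-founded order in hand, I would proceed by cases on the form of $\varphi$, in each case matching the expressions whose weak nullability is consulted against $LE(\varphi)$, and then appealing to the induction hypothesis on each $\gamma \in LE(\varphi)$ to replace the set examined for $\gamma$ by $\{\gamma\} \cup LE^+(\gamma)$. Summing over $LE(\varphi)$ and unfolding the fixed-point definition $LE^+(\varphi) = LE(\varphi) \cup \bigcup_{\gamma \in LE(\varphi)} LE^+(\gamma)$ then yields exactly $\{\varphi\} \cup LE^+(\varphi)$. The atomic cases $a$, $\varepsilon$, $\varnothing$ are immediate since $LE(\varphi) = \{\}$; the nonterminal case $A$ follows from $LE(A) = \{\Rule(A)\}$ together with $\lambda(A) = \lambda(\Rule(A))$; and the alternation case follows from $LE(\alpha/\beta) = \{\alpha,\beta\}$ matching the two subexpressions consulted by $\lambda(\alpha/\beta) = \lambda(\alpha) \lor \lambda(\beta)$.

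The two cases that require care, and where I expect the real work to lie, are the sequence and negative-lookahead expressions, precisely because there the set of expressions entered at the current input position diverges from the set whose $\lambda$-values naively appear in the defining equation. For $\alpha\beta$ the definition of $LE$ conditionally includes $\beta$, and only when $\lambda(\alpha)$ holds; so I would argue that the examination descends into $\beta$ exactly in that case, observing that computing $LE(\alpha\beta)$ already requires $\lambda(\alpha)$, so the successor is reached precisely when $\alpha$ is weakly nullable, matching $LE(\alpha\beta)$. The lookahead case $\nl\alpha$ is the subtlest: here I would show that the examination still enters $\alpha$, consistent with $LE(\nl\alpha) = \{\alpha\}$, so that by the induction hypothesis this contributes $\{\alpha\} \cup LE^+(\alpha)$, which together with $\nl\alpha$ itself gives $\{\nl\alpha\} \cup LE^+(\nl\alpha)$. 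The crux of the argument is therefore to fix the notion of ``the set examined'' as the left-expansion traversal rather than only those subexpressions literally surviving in the value of $\lambda$, and then to verify case by case that this traversal agrees with $LE$; once that correspondence is pinned down, the inductive sum closes immediately.
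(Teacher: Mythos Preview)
Your overall approach --- well-founded induction on the left-expansion relation, with well-formedness supplying acyclicity and short-circuit evaluation of $\lambda(\alpha\beta)$ so that $\beta$ is visited only when $\lambda(\alpha)$ holds --- matches the paper's, and your explicit treatment of the induction principle is, if anything, cleaner than the paper's terse ``by structural induction''.

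The gap is in the $\nl\alpha$ case. You claim ``the examination still enters $\alpha$'', but by definition $\lambda(\nl\alpha) = \top$ makes no reference to $\alpha$; the recursive computation simply does not descend there. Your fallback --- to ``fix the notion of `the set examined' as the left-expansion traversal'' --- is circular: it stipulates the conclusion rather than deriving it from how $\lambda$ actually recurses. Under the natural reading, the set examined for $\nl\alpha$ is just $\{\nl\alpha\}$, a strict subset of $\{\nl\alpha\} \cup LE^+(\nl\alpha)$ whenever $\alpha$ is nontrivial, so the ``precisely'' in the statement fails at this case. The paper's own proof is equally silent here; its downstream uses (termination in Theorem~\ref{look-thm}) need only the containment direction, and indeed the companion Lemma~\ref{nbl-wf-lem} for $\nu$ claims only a subset and explicitly records that $\nl\alpha$ contributes the strict subset $\{\nl\alpha\}$. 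The honest fix is to weaken ``precisely'' to ``a subset of'', or to prove only that direction.
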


\begin{proof}
By structural induction on $\varphi$.
Since $\varphi$ is well-formed, there is not a nonterminal expression 
$A \in \{\varphi\} \cup SUB^+(\varphi)$ for which $A \in LE^+(A)$, therefore the 
structural induction terminates without recursively expanding any 
nonterminal. Note that this assumes the $\lambda(\alpha\beta)$ case is 
calculated in a short-circuiting manner, where $\neg\lambda(\alpha)$ implies 
$\lambda(\beta)$ is not computed.
\end{proof}

\begin{lemma}
	\label{nbl-wf-lem}
	For any well-formed parsing expression $\varphi$, the set of parsing 
	expressions examined to compute $\nu(\varphi)$ is a subset of 
	$\{\varphi\} \cup LE^+(\varphi)$.
\end{lemma}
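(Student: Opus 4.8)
The plan is to mirror the proof of Lemma~\ref{look-wf-lem}, using structural induction on $\varphi$, but now tracking the set of expressions examined by $\nu$ rather than $\lambda$. The essential observation is that $\nu$ and $\lambda$ have identical recursive structure except in the lookahead case: $\nu(\nl\alpha) = \bot$ is a constant requiring no recursion, whereas $\lambda(\nl\alpha) = \top$ likewise requires none. So at the level of which \emph{subexpressions} are consulted, the only difference lies in the sequence case, where short-circuiting may cause one predicate to examine a subexpression the other skips. This is exactly why the statement weakens from ``precisely'' (in Lemma~\ref{look-wf-lem}) to ``a subset of'': computing $\nu(\alpha\beta)=\nu(\alpha)\land\nu(\beta)$ short-circuits when $\nu(\alpha)=\bot$, and by Lemma~\ref{nbl-look-lem} $\neg\nu(\alpha)$ does not imply $\neg\lambda(\alpha)$, so $\nu$ may examine a strict subset of what $\lambda$ examines.

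First I would set up the induction over the structure of $\varphi$, with the understanding (as in Lemma~\ref{look-wf-lem}) that well-formedness of $\varphi$ guarantees no nonterminal in $\{\varphi\}\cup SUB^+(\varphi)$ lies in its own $LE^+$, so the recursion bottoms out in base cases and never unfolds a nonterminal infinitely. The base cases $a$, $\varepsilon$, $\varnothing$ examine only $\{\varphi\}$ itself, trivially a subset of $\{\varphi\}\cup LE^+(\varphi)$. For the inductive cases I would proceed case by case following (\ref{nu-eqn}): for $A$, $\nl\alpha$, $\alpha/\beta$, and $\alpha\beta$, I invoke the induction hypothesis on the relevant subexpressions and check that the expressions examined are contained in $\{\varphi\}\cup LE^+(\varphi)$, appealing to the definition of $LE$ in (\ref{le-eqn}).

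The only case needing care is the sequence $\alpha\beta$. The cleanest route is to reduce to Lemma~\ref{look-wf-lem}: the set of expressions $\nu$ examines is contained in the set $\lambda$ examines, and the latter is exactly $\{\varphi\}\cup LE^+(\varphi)$. I would argue this containment directly from the two definitions. For $a,\varepsilon,\varnothing,A,\alpha/\beta$ the two predicates recurse on identical subexpression sets, so the examined sets coincide at that level and the inductive hypotheses give containment. For $\nl\alpha$, $\nu$ examines nothing below (returning $\bot$ outright) while $\lambda$ also examines nothing below (returning $\top$), so again the $\nu$-set is contained. For $\alpha\beta$, both predicates examine $\alpha$; if $\nu(\alpha)$ is false, $\nu$ short-circuits and skips $\beta$ entirely, giving a subset, whereas if $\nu(\alpha)$ is true then $\nu$ examines $\beta$ as well, and since $\nu(\alpha)\Rightarrow\lambda(\alpha)$ by Lemma~\ref{nbl-look-lem}, $\lambda$ does not short-circuit either and also examines $\beta$, so the examined sets match below $\alpha\beta$ and containment follows from the hypotheses. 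Combining these cases establishes that the $\nu$-examined set is a subset of the $\lambda$-examined set, which Lemma~\ref{look-wf-lem} identifies with $\{\varphi\}\cup LE^+(\varphi)$.

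I expect the main obstacle to be purely bookkeeping: one must be precise about what ``examined'' means (presumably the transitive closure of subexpressions consulted during the short-circuiting evaluation), and must ensure the short-circuiting convention for $\nu(\alpha\beta)$ is stated explicitly, exactly as the note in Lemma~\ref{look-wf-lem} does for $\lambda$. No genuinely hard argument arises, since well-formedness (already exploited in the prior lemma) supplies termination and Lemma~\ref{nbl-look-lem} supplies the one implication needed to align the sequence case; the weakening from equality to containment is precisely what absorbs the asymmetry introduced by short-circuiting together with the lookahead sign flip.
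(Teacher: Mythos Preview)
Your proposal is correct and takes essentially the same approach as the paper: structural induction mirroring Lemma~\ref{look-wf-lem}, with the $\nl\alpha$ case handled by observing $\nu$ does not recurse (giving a strict subset), and the $\alpha\beta$ case handled via Lemma~\ref{nbl-look-lem} so that whenever $\nu$ examines $\beta$ one has $\lambda(\alpha)$ and hence $\beta\in LE(\alpha\beta)$. Your framing via containment of the $\nu$-examined set in the $\lambda$-examined set is a slightly more indirect packaging of the same two observations, but the substance is identical.
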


\begin{proof}
	Structural induction as in Lemma~\ref{look-wf-lem}. 
	Note that for $\nl\alpha$ the set examined is 
	$\{\nl\alpha\} \subset \{\nl\alpha\} \cup LE^+(\nl\alpha)$ and that for 
	$\alpha\beta$, $\nu(\alpha) \Rightarrow \lambda(\alpha)$ by 
	Lemma~\ref{nbl-look-lem}.
\end{proof}

Having demonstrated the feasibility of the structural induction approach, the 
nullability results claimed in Section~\ref{defn-sec} can now be shown:

\begin{theorem}[Nullability]
	\label{nbl-thm}
	For any well-formed parsing expression $\varphi$, 
	$\nu(\varphi) \Rightarrow \Lang(\varphi) = \Sigma^*$.
\end{theorem}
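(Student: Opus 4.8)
The plan is to proceed by structural induction on $\varphi$, relying on Lemma~\ref{nbl-wf-lem} to make the induction well-founded: for well-formed $\varphi$ the set of expressions examined while computing $\nu(\varphi)$ lies in the finite set $\{\varphi\} \cup LE^+(\varphi)$, and well-formedness forbids any $\gamma$ from occurring in $LE^+(\gamma)$, so the recursion bottoms out rather than unfolding a nonterminal forever. The cases in which $\nu(\varphi) = \bot$ -- namely $\varphi = a$, $\varphi = \varnothing$, and $\varphi = \nl\alpha$ -- make the implication vacuously true, and for $\varphi = \varepsilon$ the conclusion is immediate from (\ref{expr-eqn}), since $\varepsilon(\str) = \str$ for every $\str$ gives $\Lang(\varepsilon) = \Sigma^*$.

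For the nonterminal case $\varphi = A$, I would use $\nu(A) = \nu(\Rule(A))$: assuming $\nu(A) = \top$ yields $\nu(\Rule(A)) = \top$, the inductive hypothesis gives $\Lang(\Rule(A)) = \Sigma^*$, and $A(\str) = (\Rule(A))(\str)$ then forces $\Lang(A) = \Lang(\Rule(A)) = \Sigma^*$. For the sequence case $\varphi = \alpha\beta$, the definition $\nu(\alpha\beta) = \nu(\alpha) \land \nu(\beta)$ gives both $\nu(\alpha) = \top$ and $\nu(\beta) = \top$, hence $\Lang(\alpha) = \Lang(\beta) = \Sigma^*$ by the inductive hypothesis. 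Given an arbitrary $\str$, I would run $\alpha$ to get $\alpha(\str) = \str'$ for some suffix $\str' \in \Sigma^*$ (never $\fail$, as $\alpha$ matches everything); since $\str'$ is itself a string, $\beta(\str') = \str''$ for some suffix $\str''$, and the sequence clause of (\ref{expr-eqn}) yields $\alpha\beta(\str) = \str''$, so $\str \in \Lang(\alpha\beta)$.

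For the alternation case $\varphi = \alpha/\beta$, the definition $\nu(\alpha/\beta) = \nu(\alpha) \lor \nu(\beta)$ gives $\nu(\alpha) = \top$ or $\nu(\beta) = \top$. If $\nu(\alpha) = \top$, then $\alpha$ matches every $\str$ and the first clause of the alternation definition applies directly; if instead only $\nu(\beta) = \top$, then for each $\str$ either $\alpha(\str)$ already succeeds (first clause, a match) or $\alpha(\str) = \fail$ and $\beta(\str)$ succeeds by the inductive hypothesis (second clause, a match). In both subcases $\alpha/\beta$ matches $\str$, so $\Lang(\alpha/\beta) = \Sigma^*$.

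The one genuine obstacle is justifying that the induction terminates despite nonterminals that may re-expand themselves; this is exactly what Lemmas~\ref{look-wf-lem} and~\ref{nbl-wf-lem} supply, and the well-formedness hypothesis is indispensable, since a left-recursive grammar would defeat the induction just as it defeats parsing. Beyond that, the remaining work is the routine matching of the conjunction (sequence) and disjunction (alternation) of $\nu$ against the semantics of (\ref{expr-eqn}), with the only delicate point being to observe in the sequence case that the intermediate suffix $\str'$ returned by $\alpha$ is a genuine element of $\Sigma^*$ on which $\beta$ may then be evaluated.
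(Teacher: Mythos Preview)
Your proposal is correct and follows essentially the same approach as the paper: structural induction on $\varphi$, termination justified via Lemma~\ref{nbl-wf-lem}, with the $a$, $\varnothing$, $\nl\alpha$ cases vacuous, $\varepsilon$ and $A$ immediate, and the sequence and alternation cases handled by unpacking the conjunction and disjunction in the definition of $\nu$ against the semantics in (\ref{expr-eqn}). Your treatment is somewhat more explicit than the paper's (particularly in the alternation case, where the paper compresses the argument via a contrapositive), but the structure and content are the same.
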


\begin{proof}
	By structural induction on $\varphi$; by Lemma~\ref{nbl-wf-lem} this induction 
	does terminate. 
	The $a$, $\varnothing$, and $\nl\alpha$ cases are trivially satisfied; the 
	$\varepsilon$ and $A$ cases follow from definition.
	For $\alpha\beta$ where $\nu(\alpha) \land \nu(\beta)$, the inductive 
	hypothesis implies that for all $\str$ there exist 
	$\str[r], \str[t] \in \Sigma^*, \str = \str[rt]$, such that 
	$\alpha(\str) = \str[t]$ and $\str[t] \in \Lang(\beta)$.
	For $\alpha/\beta$ where $\nu(\alpha) \lor \nu(\beta)$, 
	if $\str \in \Sigma^* \notin \Lang(\alpha)$, by the inductive hypothesis 
	$\neg\nu(\alpha) \therefore \nu(\beta) \therefore \str \in \Lang(\beta) \therefore \str \in \Lang(\alpha/\beta)$.
\end{proof}

\begin{theorem}[Weak Nullability]
	\label{look-thm}
	For any well-formed parsing expression $\varphi$, 
	$\mt \in \Lang(\varphi) \Rightarrow \lambda(\varphi)$.
\end{theorem}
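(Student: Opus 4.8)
The plan is to proceed by structural induction on $\varphi$, exactly parallel to the argument for Theorem~\ref{nbl-thm}; Lemma~\ref{look-wf-lem} guarantees that for well-formed $\varphi$ this induction terminates, since the set of expressions examined to compute $\lambda(\varphi)$ is confined to $\{\varphi\} \cup LE^+(\varphi)$ and no nonterminal in that set expands itself. The crucial structural fact I would lean on throughout is that a parsing expression applied to the empty string $\mt$ can only return a suffix of $\mt$, and the unique suffix of $\mt$ is $\mt$ itself; hence $\mt \in \Lang(\varphi)$ is equivalent to $\varphi(\mt) = \mt$.

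First I would dispatch the cases where the conclusion $\lambda(\varphi)$ is forced independently of the premise: $\lambda(\varepsilon) = \top$ and $\lambda(\nl\alpha) = \top$ by definition in (\ref{lambda-eqn}), so these hold trivially. The $\nl\alpha$ case is precisely where $\lambda$ over-approximates, consistent with Ford's undecidability results, so no information about whether $\nl\alpha$ actually matches $\mt$ is needed. Next, the cases $a$ and $\varnothing$ are vacuous: by (\ref{expr-eqn}) neither a character literal nor $\varnothing$ matches $\mt$, so the premise $\mt \in \Lang(\varphi)$ is false. The nonterminal case $A$ follows immediately from the inductive hypothesis together with $A(\mt) = (\Rule(A))(\mt)$ and $\lambda(A) = \lambda(\Rule(A))$.

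The two composite cases carry the real content. For alternation, $\mt \in \Lang(\alpha/\beta)$ forces by (\ref{expr-eqn}) that $\mt \in \Lang(\alpha)$ or $\mt \in \Lang(\beta)$; the inductive hypothesis then yields $\lambda(\alpha)$ or $\lambda(\beta)$, whence $\lambda(\alpha/\beta) = \lambda(\alpha) \lor \lambda(\beta)$ holds. The sequence case is the step I expect to require the most care, since from $\alpha\beta(\mt) = \mt$ I must recover that \emph{both} subexpressions match $\mt$. Here the suffix observation is essential: writing $\alpha(\mt) = \str'$, the string $\str'$ is a suffix of $\mt$ and therefore equals $\mt$, giving $\mt \in \Lang(\alpha)$; applying $\beta$ to this result, $\beta(\str') = \beta(\mt) = \str''$ is again a suffix of $\mt$ and so $\str'' = \mt$, giving $\mt \in \Lang(\beta)$. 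The inductive hypothesis then supplies $\lambda(\alpha)$ and $\lambda(\beta)$, yielding $\lambda(\alpha\beta) = \lambda(\alpha) \land \lambda(\beta)$ and completing the induction.
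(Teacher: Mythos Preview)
Your proof is correct and follows essentially the same approach as the paper: structural induction on $\varphi$, with termination justified by Lemma~\ref{look-wf-lem}, the same trivial/vacuous base cases, and the same suffix-of-$\mt$ observation driving the sequence case. The only cosmetic difference is that the paper argues the contrapositive $\neg\lambda(\varphi) \Rightarrow \mt \notin \Lang(\varphi)$ while you prove the implication directly; your version has the virtue of making the suffix argument in the $\alpha\beta$ case explicit, where the paper leaves $\alpha(\mt) = \mt$ implicit.
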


\begin{proof}
	By structural induction on contrapositive; by \linebreak 
	Lemma~\ref{look-wf-lem} this induction does terminate. 
	The $\varepsilon$ and $\nl\alpha$ cases are trivially satisfied, the $a$, 
	$\varnothing$, and $A$ cases follow from definition.
	If $\neg\lambda(\alpha\beta)$, by the inductive hypothesis 
	$\mt \in \Lang(\alpha) \Rightarrow \mt \notin \Lang(\beta) 
	\therefore \alpha(\mt) = \mt \Rightarrow \beta(\mt) = \fail
	\therefore \mt \notin \Lang(\alpha\beta)$.
	If $\neg\lambda(\alpha/\beta)$, $\neg\lambda(\alpha) \land \neg\lambda(\beta)$; 
	by the inductive hypothesis 
	$\mt \notin \Lang(\alpha) \land \mt \notin \Lang(\beta)
	\therefore \alpha(\mt) = \beta(\mt) = \alpha/\beta(\mt) = \fail$
\end{proof}

The nullability and weak nullability functions discussed in this paper are 
closely related to the $\rightharpoonup$ relation defined by 
Ford\cite[\S~3.5]{For04}. In Ford's formulation, $\varphi \rightharpoonup 0$ 
means that $\varphi$ may match while consuming no characters, 
$\varphi \rightharpoonup 1$ means that $\varphi$ may match while consuming at 
least one character, and $\varphi \rightharpoonup f$ means that $\varphi$ may 
fail to match. When recursively applied, the simplification rules in 
Table~\ref{simpl-table} ensure that any expression which is structurally 
incapable of matching\footnote{Note that this does not include expressions 
such as $\left(\nl a\right) a$, which will never match, but exceed the power 
of the rules in Table~\ref{simpl-table} to analyze.} is reduced to the 
equivalent expression $\varnothing$. Under this transformation, 
$\neg\nu(\varphi) \equiv \varphi \rightharpoonup f$, while 
$\lambda(\varphi) \equiv \varphi \rightharpoonup 0$.

\subsection{Derivatives}

As with the nullability predicates, the correctness proof for the derivative 
parsing algorithm relies heavily on structural induction, and as such must 
demonstrate the termination of that induction (Theorem~\ref{norm-wf-thm}). 
Additionally, it must be shown that the normalization step applied does not 
meaningfully change the semantics of the parsing expressions 
(Theorem~\ref{norm-thm}).

To discuss the effects of the normalization function $\n{\bullet}$, some 
terminology must be introduced. Since $\n{\bullet}$ applies to the 
left-expansion of its argument (per Lemma~\ref{norm-le-lem}), a 
\emph{normalized} parsing expression $\varphi$ is defined as a well-formed 
parsing expression with no un-normalized expressions in its left-expansion 
(\ie{}, (\ref{norm-eqn})) and where all sequence expressions 
$\genseq \in \{\varphi\} \cup LE^+(\varphi)$ respect the sequence 
normalization property (\ref{seq-norm-eqn}).
The more precise class of \emph{$k$-normalized} parsing expressions are 
normalized parsing expressions $\varphi$ with indices no greater than $k$ 
(\ie{}, (\ref{k-norm-eqn})). By contrast, an \emph{un-normalized} 
parsing expression $\varphi$ is one where (\ref{unnorm-eqn}) holds.
\begin{gather}
	\nexists \varepsilon, A, \nl\alpha, \alpha\beta \in \{\varphi\} \cup LE^+(\varphi)	\label{norm-eqn}		\\
	\lbrace i_1, i_2, \cdots, i_k \rbrace = back(\alpha)								\label{seq-norm-eqn}	\\
	\exists \varepsilon_j, \nl_j\alpha, \alpha\beta[\beta_{i_1} \cdots \beta_j] \in \{\varphi\} \cup SUB^+(\varphi) \Rightarrow j \leq k	\label{k-norm-eqn}	\\
	\nexists \varepsilon_j, \nl_j\alpha, \genseq \in \{\varphi\} \cup SUB^+(\varphi)	\label{unnorm-eqn}
\end{gather}

First I show that normalized parsing expressions have a finite expansion amenable to structural induction:

\begin{lemma}
	\label{norm-le-lem}
	$\forall \varphi \in \mathcal{X}, i \in \mathbb{N}$, the set of parsing 
	expressions expanded by $\n{\varphi}$ is precisely $LE^+(\varphi)$.
\end{lemma}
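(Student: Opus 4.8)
The plan is to reduce the set equality to a one-step correspondence between the recursive calls of the normalization function $\n{\bullet}$ and the immediate left-expansion $LE$, and then lift that correspondence to transitive closures. Because the lemma ranges over all $\varphi \in \mathcal{X}$ rather than only well-formed ones, I would not rely on a terminating structural induction: an ill-formed $\varphi$ may left-recursively expand itself, so both the evaluation of $\n[i]{\varphi}$ and the set $LE^+(\varphi)$ can be infinite. Instead I would argue with the one-step \emph{expansion} relation $\varphi \to \gamma$, defined to hold exactly when evaluating $\n[i]{\varphi}$ issues a recursive normalization call on $\gamma$ (the index $i$ is immaterial to which subexpressions are produced, since every recursive call preserves it). The set of expressions expanded by $\n[i]{\varphi}$ is then the set of $\gamma$ reachable from $\varphi$ by one or more applications of $\to$, and the goal becomes showing that this reachable set is exactly $LE^+(\varphi)$.

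First I would prove the local claim $\varphi \to \gamma \iff \gamma \in LE(\varphi)$ by matching each clause of $\n{\bullet}$ against (\ref{le-eqn}). The $a$, $\varepsilon$, and $\varnothing$ clauses make no recursive call and have empty $LE$; $\n[i]{A} = \n[i]{\Rule(A)}$ recurses only on $\Rule(A)$, which is precisely $LE(A)$; $\n[i]{\nl\alpha} = \nl_i\n[i]{\alpha}$ recurses only on $\alpha$, matching $LE(\nl\alpha)$; and $\n[i]{\alpha/\beta} = \n[i]{\alpha}/\n[i]{\beta}$ recurses on both $\alpha$ and $\beta$, matching $LE(\alpha/\beta)$.

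The sequence clause is the main obstacle, being the only one carrying a side condition. Here $\n[i]{\alpha\beta}$ recurses unconditionally into $\alpha$ -- matching $\alpha \in LE(\alpha\beta)$ in both branches of (\ref{le-eqn}) -- while it issues the recursive call computing the follower $\beta_i$ only when its weak-nullability guard fires, and I would verify that this guard coincides with the condition under which $\beta \in LE(\alpha\beta)$, so that $\beta$ is expanded exactly when $\beta \in LE(\alpha\beta)$. The delicate point is bookkeeping: the template occurrence of $\beta$ in $\n[i]{\alpha}\beta[\cdots]$ is copied verbatim rather than normalized and so must be excluded from the expanded set, whereas the follower occurrence $\beta_i = \n[i]{\beta}$ is genuinely normalized and must be counted. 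Getting this distinction right is what makes the correspondence an exact equality rather than a mere containment.

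Finally I would lift the one-step correspondence to the full statement. Since $\to$ and $LE$ agree as one-step relations, their transitive closures agree as well, so the set reachable from $\varphi$ equals $LE(\varphi) \cup \bigcup_{\gamma \in LE(\varphi)} LE^+(\gamma)$, which is exactly the fixed-point definition of $LE^+(\varphi)$ given in Section~\ref{defn-sec}. For well-formed $\varphi$ this reachable set is finite, so the argument specializes to the terminating structural induction relied upon by Theorem~\ref{norm-wf-thm}; for arbitrary $\varphi$ the equality of the (possibly infinite) reachable sets still holds, completing the proof.
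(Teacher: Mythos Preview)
Your proof is correct and follows the same approach as the paper---a case analysis matching each clause of $\n{\bullet}$ against the corresponding clause of $LE$---though you have been far more explicit than the paper, whose entire proof reads ``By cases on definitions of $\n{\bullet}$, $LE$.'' Your care in distinguishing the verbatim-copied $\beta$ from the normalized follower $\beta_i$, and in lifting the one-step correspondence to transitive closures without assuming well-formedness (so that the argument still makes sense when $\n[i]{\varphi}$ fails to terminate), goes well beyond what the paper spells out but does not depart from its method.
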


\begin{proof}
By cases on definitions of $\n{\bullet}$, $LE$.
\end{proof}


\begin{theorem}[Finite Expansion]
	\label{norm-wf-thm}
	For any well-formed, un-normalized parsing expression $\varphi$, $\n{\varphi}$ has a 
	finite expansion.
\end{theorem}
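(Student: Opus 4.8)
The plan is to reduce the statement to the finiteness and acyclicity of $LE^+(\varphi)$, relying on Lemma~\ref{norm-le-lem} to link this to normalization. Because $\varphi$ is un-normalized, the clauses defining $\n{\bullet}$ apply and $\n{\varphi}$ is well-defined; Lemma~\ref{norm-le-lem} then states that the set of parsing expressions expanded during the computation of $\n{\varphi}$ is precisely $LE^+(\varphi)$. It therefore suffices to show that, for well-formed $\varphi$, this set is finite and contains no expression lying on its own expansion chain.

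The cardinality bound is automatic: comparing (\ref{le-eqn}) with (\ref{sub-eqn}) gives $LE(\gamma) \subseteq SUB(\gamma)$ for every $\gamma$, whence $LE^+(\varphi) \subseteq SUB^+(\varphi)$, and $SUB^+(\varphi)$ is finite because a grammar has finitely many nonterminals, each mapped by $\Rule$ to a finite expression, so only finitely many distinct subexpressions occur. The substantive ingredient is acyclicity, which is exactly the well-formedness hypothesis: for every $\gamma \in \{\varphi\} \cup SUB^+(\varphi)$ we have $\gamma \notin LE^+(\gamma)$, so no expression is reached from its own left-expansion. A finite acyclic relation admits only finite expansion paths, which bottom out at the base cases $a$, $\varepsilon$, $\varnothing$ of (\ref{le-eqn}); hence $\n{\varphi}$ expands only finitely many expressions, none reachable from itself, and structural induction over $\n{\varphi}$ is well-founded.

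The step I expect to carry the weight is the sequence case underlying Lemma~\ref{norm-le-lem}. Normalizing $\alpha\beta$ emits the follower $\n{\beta}$, and I must verify that this follower expands nothing outside $LE^+(\alpha\beta)$; otherwise a nonterminal already consumed along the left spine could reappear, reintroducing a cycle into the left-expansion of the normalized result and defeating the acyclicity just established. The needed containment is precisely what Lemma~\ref{norm-le-lem} records when read against the clause $LE(\genseq) = \{\alpha, \beta_{i_1}, \ldots, \beta_{i_k}\}$ of (\ref{extra-le-eqn}), which folds each follower into the left-expansion. Granting the lemma, the cardinality-plus-acyclicity argument closes the theorem with no further computation.
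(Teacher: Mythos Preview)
Your proposal is correct and follows essentially the same route as the paper, which tersely says the result ``follows directly from the definition of well-formed and Lemma~\ref{norm-le-lem}.'' You have simply unpacked that one line---arguing finiteness of $LE^+(\varphi)$ via $SUB^+(\varphi)$ and acyclicity via the well-formedness hypothesis---and your third paragraph, while a reasonable sanity check on the sequence case of Lemma~\ref{norm-le-lem}, is not needed once that lemma is granted.
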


\begin{proof}
Follows directly from the definition of well-formed and Lemma~\ref{norm-le-lem}.
\end{proof}

Then I show that normalization does not change the semantics of the parsing 
expression:

\begin{theorem}[Normalization]
	\label{norm-thm}
	For any well-formed, \linebreak{}un-normalized parsing expression $\varphi$ and 
	string \linebreak{}$\str = \strc{k+1}{k+2}{k+n} \in \Sigma^*$, $\varphi(\str) = \n[k]{\varphi}(\str)$.
\end{theorem}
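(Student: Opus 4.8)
The plan is to prove $\varphi(\str) = \n[k]{\varphi}(\str)$ by structural induction on $\varphi$, which terminates by Theorem~\ref{norm-wf-thm} (since $\n[k]{\varphi}$ has finite expansion for well-formed $\varphi$). The key observation is that the annotated expressions defined in (\ref{extra-expr-eqn}) are engineered to match or fail under the \emph{same} conditions as their unannotated counterparts in (\ref{expr-eqn}); the index annotation only affects which suffix is returned, not whether a match occurs. So the heart of the argument is tracking the returned suffix index through each case.

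First I would dispose of the base cases. For $\n[k]{a} = a$ and $\n[k]{\varnothing} = \varnothing$, the normalization is the identity, so equality is immediate. The interesting base case is $\n[k]{\varepsilon} = \varepsilon_k$: by (\ref{extra-expr-eqn}), $\varepsilon_k(\substr{k}) = \substr{k+1}$, and here $\substr{k}$ is exactly the input string $\str = \strc{k+1}{k+2}{k+n}$, so $\varepsilon_k$ returns $\substr{k+1} = \str$, matching $\varepsilon(\str) = \str$. This establishes the convention that the index $k$ records the position immediately before the string, so that the annotated expression recovers the correct unconsumed suffix.

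Next I would handle the inductive cases, applying the induction hypothesis to subexpressions at the appropriate index. For $\n[k]{\nl\alpha} = \nl_k\n[k]{\alpha}$, the induction hypothesis gives $\alpha(\str) = \n[k]{\alpha}(\str)$, so $\n[k]{\alpha}$ fails exactly when $\alpha$ fails; comparing the failure/success conditions in the $\nl\alpha$ clause of (\ref{expr-eqn}) with the $\nl_k\alpha$ clause of (\ref{extra-expr-eqn}), and noting the returned suffix is $\substr{k+1} = \str$ on success, gives equality. For $\n[k]{A} = \n[k]{\Rule(A)}$, the claim follows from $A(\str) = (\Rule(A))(\str)$ together with the induction hypothesis applied to $\Rule(A)$. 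The alternation case $\n[k]{\alpha/\beta} = \n[k]{\alpha}/\n[k]{\beta}$ uses the induction hypothesis on both branches at the same index $k$, since neither branch consumes input before the other is attempted; matching the three clauses of the $\alpha/\beta$ rule against the annotated branches gives the result directly.

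The main obstacle I expect is the sequence case $\n[k]{\alpha\beta} = \n[k]{\alpha}\,\beta\,[\beta_k = \n[k]{\beta} \text{ if } \lambda(\beta)]$. Here I would apply the induction hypothesis to $\alpha$ to get $\alpha(\str) = \n[k]{\alpha}(\str)$; if this is $\fail$, both sequences fail and we are done. Otherwise $\alpha(\str) = \substr{j}$ for some suffix index $j$, and the annotated sequence must continue parsing $\beta$ from position $j$. The subtlety is that the lookahead-follower list $[\beta_k = \n[k]{\beta} \text{ if } \lambda(\beta)]$ only carries a normalized follower when $\beta$ is weakly nullable, so I must verify that the list is populated exactly in the cases where $\alpha$ can stop without consuming input (index $j = k$), invoking Theorem~\ref{look-thm} to justify that $\lambda(\beta)$ is the right guard. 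Having established that $\n[k]{\alpha}$ returns the same stopping index as $\alpha$, I would apply the induction hypothesis to $\beta$ at that index to conclude $\beta(\substr{j}) = \n[j]{\beta}(\substr{j})$, and then check against the $\alpha\beta(\substr{k}) = \alpha\beta(\substr{k})$ identity in (\ref{extra-expr-eqn}) that the composite returns the same final suffix. The bookkeeping of indices through this two-stage match is where the real care is required.
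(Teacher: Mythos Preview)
Your overall approach---structural induction on $\varphi$, justified via Theorem~\ref{norm-wf-thm}---is exactly what the paper does, and your treatment of the $a$, $\varnothing$, $\varepsilon$, $A$, $\nl\alpha$, and $\alpha/\beta$ cases is correct and matches the paper's (terse) proof.

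The sequence case, however, is where you go astray. You describe it as ``the main obstacle'' and propose tracking the stopping index $j$, applying the inductive hypothesis to $\beta$ at that index, and verifying that the lookahead-follower list is populated precisely when $\alpha$ can stop without consuming input. None of this is needed, and the last step would not actually go through. The key point you are missing is that definition~(\ref{extra-expr-eqn}) sets $\genseq(\substr{k}) = \alpha\beta(\substr{k})$: the semantics of the annotated sequence \emph{completely ignore} the follower list and simply run the first component followed by the un-normalized $\beta$. Since $\n[k]{\alpha\beta} = \n[k]{\alpha}\,\beta\,[\ldots]$, its semantics are those of the ordinary sequence $\n[k]{\alpha}\,\beta$, and by the inductive hypothesis on $\alpha$ alone (the second component is the \emph{same} un-normalized $\beta$ on both sides) this equals $\alpha\beta(\str)$. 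There is no need to invoke the inductive hypothesis on $\beta$, no need to reason about the guard $\lambda(\cdot)$, and no need for Theorem~\ref{look-thm}. The followers exist only so that later derivative steps can retrieve pre-computed derivatives of $\beta$; they play no role in the parsing-expression semantics, which is why the paper disposes of this case with ``follows directly from the relevant definitions and the inductive hypothesis.''
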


\begin{proof}
	By structural induction on $\varphi$.
	By Theorem~\ref{norm-wf-thm} (Finite Expansion) $\n[k]{A}$ has a finite expansion, thus the structural induction is 
	bounded. 
	$\n[k]{ \varepsilon }(\str) = \varepsilon_k(\str) = \strc{k+1}{k+2}{k+n} = \varepsilon(\str)$. 
	$\n[k]{\nl\alpha}(\str) = \nl_k\n[k]{\alpha}(\str)$; by the inductive hypothesis 
	\linebreak{} $\n[k]{\alpha}(\str) = \alpha(\str)$, thus by the definitions in (\ref{expr-eqn}) 
	and (\ref{extra-expr-eqn}) \linebreak{} $\nl_k\n[k]{\alpha}(\str) = \nl\alpha(\str)$.
	The other cases follow directly from the relevant definitions and the inductive hypothesis.
\end{proof}

In addition to showing that $\n{\varphi}$ is semantically equivalent to the 
original expression $\varphi$, the index sets $match$ and $back$ must be shown 
to represent equivalent concepts of nullability to $\nu$ and $\lambda$, 
respectively. The following lemmas present some useful properties of $match$ 
and $back$; all can be straightforwardly shown by structural induction over 
$\varphi$:

\begin{lemma}
	\label{back-match-subset-lem}
	For any normalized $\varphi \in \mathcal{X}$, $match(\varphi) \subseteq back(\varphi)$.
\end{lemma}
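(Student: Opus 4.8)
The plan is to proceed by structural induction on the normalized expression $\varphi$, following the recursive structure shared by the definitions of $match$~(\ref{match-eqn}) and $back$~(\ref{back-eqn}). Since $\varphi$ is normalized it is by definition well-formed, so its subexpression closure $\{\varphi\} \cup SUB^+(\varphi)$ is finite and every subexpression encountered is one of the annotated forms $a$, $\varepsilon_j$, $\varnothing$, $\nl_j\alpha$, $\genseq$, or $\alpha/\beta$; consequently the induction is well-founded, by the same reasoning used for the termination of the earlier inductions (Lemmas~\ref{look-wf-lem} and~\ref{nbl-wf-lem}). The inductive hypothesis is that $match(\gamma) \subseteq back(\gamma)$ for every normalized subexpression $\gamma$ strictly below $\varphi$.

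The base cases follow by direct inspection. We have $match(a) = match(\varnothing) = \{\}$, trivially a subset of anything; $match(\varepsilon_j) = \{j\} = back(\varepsilon_j)$; and $match(\nl_j\alpha) = \{\} \subseteq \{j\} = back(\nl_j\alpha)$. The alternation case is also immediate: since $match(\alpha/\beta) = match(\beta)$ while $back(\alpha/\beta) = back(\alpha) \cup back(\beta)$, the inductive hypothesis $match(\beta) \subseteq back(\beta)$ gives $match(\alpha/\beta) \subseteq back(\alpha/\beta)$ at once.

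The only substantive case is the sequence expression $\genseq$, and this is where I expect the main difficulty to lie, because the two definitions union over \emph{different} index sets: $match(\genseq) = \cup_{j \in match(\alpha)} match(\beta_j)$ ranges over $match(\alpha)$, whereas $back(\genseq) = \cup_{j \in \{i_1, \cdots, i_k\}} back(\beta_j)$ ranges over the follower indices $\{i_1, \cdots, i_k\}$. The crux is to reconcile these two index sets, and the sequence normalization invariant is exactly the tool that does so. Applying the inductive hypothesis to $\alpha$ gives $match(\alpha) \subseteq back(\alpha)$, and the sequence normalization property~(\ref{seq-norm-eqn}) identifies $back(\alpha) = \{i_1, \cdots, i_k\}$. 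Hence every $j \in match(\alpha)$ is a genuine follower index, so each $\beta_j$ appearing in the $match$-side union is an actual follower of the sequence (and thus a normalized subexpression), and the inductive hypothesis applies to it to give $match(\beta_j) \subseteq back(\beta_j)$. Chaining these containments yields $\cup_{j \in match(\alpha)} match(\beta_j) \subseteq \cup_{j \in match(\alpha)} back(\beta_j) \subseteq \cup_{j \in \{i_1, \cdots, i_k\}} back(\beta_j)$, where the first inclusion is the inductive hypothesis on each follower and the second is the index-set containment $match(\alpha) \subseteq \{i_1, \cdots, i_k\}$; the right-hand side is precisely $back(\genseq)$, completing the case and the induction.
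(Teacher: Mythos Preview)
Your proof is correct and follows the same structural-induction approach the paper indicates; the paper simply asserts the lemma ``can be straightforwardly shown by structural induction over $\varphi$'' without spelling out the cases, whereas you have correctly filled in the details, including the key use of the sequence normalization invariant~(\ref{seq-norm-eqn}) to reconcile the differing index sets in the $\genseq$ case.
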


\begin{lemma}
	\label{back-i-subset-lem}
	For any well-formed, un-normalized $\varphi \in \mathcal{X}$, $back(\n{\varphi}) \subseteq \{i\}$.
\end{lemma}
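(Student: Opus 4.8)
The plan is to prove this by structural induction on $\varphi$ at a fixed index $i$. First I would invoke Lemma~\ref{norm-le-lem}: since $\varphi$ is well-formed, the set of expressions expanded while forming $\n{\varphi}$ is exactly $LE^+(\varphi)$, which is finite, so the induction is bounded and terminates in base cases. The base cases are read off directly from (\ref{norm-eqn}) and (\ref{back-eqn}): $\n{a} = a$ and $\n{\varnothing} = \varnothing$ give $back = \{\}$, and $\n{\varepsilon} = \varepsilon_i$ gives $back(\varepsilon_i) = \{i\}$, each a subset of $\{i\}$.

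Next I would dispatch the structurally simple inductive cases. For a nonterminal, $\n{A} = \n{\Rule(A)}$ is normalized at the same index $i$, so $back(\n{A}) = back(\n{\Rule(A)}) \subseteq \{i\}$ immediately from the inductive hypothesis. For negative lookahead, $\n{\nl\alpha} = \nl_i\n{\alpha}$ and $back(\nl_i\n{\alpha}) = \{i\}$ holds by definition regardless of $\alpha$, so no appeal to the hypothesis is even needed. For alternation, $\n{\alpha/\beta} = \n{\alpha}/\n{\beta}$ with both branches normalized at $i$, whence $back(\n{\alpha/\beta}) = back(\n{\alpha}) \cup back(\n{\beta}) \subseteq \{i\}$ by applying the hypothesis to each branch and taking the union.

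The heart of the argument is the sequence case, and the key observation I would lean on is that $back$ of a sequence expression, per (\ref{back-eqn}), inspects only the list of lookahead followers and never its first component. By (\ref{norm-eqn}), $\n{\alpha\beta} = \n{\alpha}\beta[\beta_i = \n{\beta} \text{ if } \lambda(\beta)]$, so a freshly-normalized sequence carries at most one follower: the single $\beta_i = \n{\beta}$ at index $i$, and only when $\lambda(\beta)$ holds. When $\lambda(\beta)$ is false the follower list is empty and $back$ collapses to $\{\}$; when $\lambda(\beta)$ is true the union in (\ref{back-eqn}) ranges over the lone index $i$ and yields $back(\n{\beta})$, which the inductive hypothesis bounds by $\{i\}$. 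Either way the result lies within $\{i\}$.

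The hard part will not be any of the individual computations but rather the surrounding bookkeeping that legitimizes the induction: I would need to confirm that every subexpression to which the hypothesis is applied is itself well-formed and un-normalized. Well-formedness is hereditary, and un-normalizedness of $\varphi$ means (\ref{unnorm-eqn}) holds for all of $\{\varphi\} \cup SUB^+(\varphi)$, so it descends to every subexpression, making $\n{\bullet}$ applicable and the hypothesis available at each step. Once that is settled, each case produces a subset of $\{i\}$ and the induction closes.
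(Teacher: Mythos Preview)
Your proposal is correct and follows the same approach the paper indicates: the paper simply states that Lemmas~\ref{back-match-subset-lem}--\ref{match-size-lem} ``can be straightforwardly shown by structural induction over $\varphi$'' without spelling out the cases, and your argument is exactly that induction carried out in full. Your handling of the sequence case---observing that a freshly normalized sequence carries at most the single follower $\beta_i = \n{\beta}$, so that $back$ reduces to $back(\n{\beta})$ and the inductive hypothesis applies---is the only non-trivial step, and it matches what the paper leaves implicit.
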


\begin{lemma}
	\label{match-size-lem}
	For any normalized $\varphi \in \mathcal{X}$, $\left| match(\varphi) \right| \leq 1$.
\end{lemma}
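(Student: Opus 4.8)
The plan is to prove Lemma~\ref{match-size-lem} by structural induction on the normalized expression $\varphi$, proceeding in parallel with the two companion lemmas (\ref{back-match-subset-lem} and \ref{back-i-subset-lem}), since the sequence case of $match$ will draw on both the inductive hypothesis for $\left|match\right|$ and the subset relation $match(\beta_j) \subseteq back(\beta_j)$. Since $\varphi$ is normalized, by the remarks preceding the lemma its left-expansion is finite, so the structural induction is well-founded and terminates; I would cite Theorem~\ref{norm-wf-thm} and Lemma~\ref{norm-le-lem} to justify this up front.

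First I would dispatch the base and easy cases directly from (\ref{match-eqn}): $match(a) = \{\}$, $match(\varnothing) = \{\}$, and $match(\nl_i\alpha) = \{\}$ all have size $0$, while $match(\varepsilon_i) = \{i\}$ has size exactly $1$. For alternation, $match(\alpha/\beta) = match(\beta)$, so the bound follows immediately from the inductive hypothesis applied to $\beta$. The interesting case is the sequence expression $\genseq$, where $match(\genseq) = \cup_{j \in match(\alpha)} match(\beta_j)$. By the inductive hypothesis, $\left|match(\alpha)\right| \leq 1$, so the union ranges over at most one index $j$; the result is therefore either empty (if $match(\alpha) = \{\}$) or equal to $match(\beta_j)$ for a single $j$, and in the latter case the inductive hypothesis applied to $\beta_j$ gives $\left|match(\beta_j)\right| \leq 1$. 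Either way the union has size at most $1$.

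The main obstacle I anticipate is purely a matter of bookkeeping rather than mathematical depth: I must confirm that every subexpression over which I apply the inductive hypothesis is itself normalized, so that the hypothesis is available. For $\alpha$, $\beta$, and each $\beta_{i_j}$ this should follow from the definition of a normalized expression together with the closure of $SUB^+$ and $LE^+$ under the relevant operations, but I would state this reduction explicitly, since the $\beta_j$ appearing in the sequence case are the derivatives introduced by normalization rather than raw subexpressions of the grammar. I would also note that the argument never needs the sequence normalization property (\ref{seq-norm-eqn}) directly for the size bound — only that the relevant pieces are normalized — which keeps the sequence case clean.

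Taken together, these cases establish $\left|match(\varphi)\right| \leq 1$ for all normalized $\varphi$, completing the induction. The key takeaway is that the single-index bound propagates through sequences precisely because $match(\alpha)$ selects at most one continuation index, so no branching can inflate the match set; this is exactly the property that later justifies reading off a unique match position from $\varphi^{(n)}$ in Theorem~\ref{correct-thm}.
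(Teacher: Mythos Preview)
Your proposal is correct and follows exactly the approach the paper indicates: the paper's own proof is simply the remark that Lemmas~\ref{back-match-subset-lem}--\ref{match-size-lem} ``can be straightforwardly shown by structural induction over $\varphi$,'' and your case analysis is the natural unfolding of that sketch. One small correction: in the sequence case the companion fact you actually need is $match(\alpha)\subseteq back(\alpha)$ (together with (\ref{seq-norm-eqn})) to guarantee that the selected $\beta_j$ appears among the listed followers, not $match(\beta_j)\subseteq back(\beta_j)$ as you wrote; but this does not affect the size argument itself.
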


Given these lemmas, the main result of equivalence to the nullability 
predicates can be shown, with the corollary that the normalization function 
$\n{\bullet}$ respects the normalization rule ($\ref{seq-norm-eqn}$):

\begin{theorem}[Nullability Equivalence]
	\label{back-match-thm}
	For a well-formed, un-normalized parsing expression $\varphi$, 
	$\nu(\varphi) \Rightarrow match(\n[k]{\varphi}) = \{k\}$ and 
	$\lambda(\varphi) \Rightarrow back(\n[k]{\varphi}) = \{k\}$.
\end{theorem}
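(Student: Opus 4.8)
The plan is to prove both implications simultaneously by structural induction on $\varphi$. Since $\varphi$ is well-formed and un-normalized, Theorem~\ref{norm-wf-thm} (Finite Expansion) guarantees that $\n[k]{\varphi}$ has a finite expansion, so the induction over the structure of $\n[k]{\varphi}$ terminates in the base cases; I would note this at the outset to justify the induction. The induction hypothesis is that both claims hold for every subexpression of $\varphi$ that appears in its left-expansion.

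First I would dispatch the base and easy cases. For $a$ and $\varnothing$, both $\nu$ and $\lambda$ are $\bot$, so both implications hold vacuously. For $\varepsilon$, $\n[k]{\varepsilon} = \varepsilon_k$, and by~(\ref{match-eqn}) and~(\ref{back-eqn}) we have $match(\varepsilon_k) = back(\varepsilon_k) = \{k\}$, matching $\nu(\varepsilon) = \lambda(\varepsilon) = \top$. For a nonterminal $A$, normalization gives $\n[k]{A} = \n[k]{\Rule(A)}$ and the definitions~(\ref{nu-eqn}),~(\ref{lambda-eqn}) give $\nu(A) = \nu(\Rule(A))$, $\lambda(A) = \lambda(\Rule(A))$, so the claim follows directly from the inductive hypothesis applied to $\Rule(A)$. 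For the lookahead case $\nl\alpha$, I have $\n[k]{\nl\alpha} = \nl_k\n[k]{\alpha}$; since $\nu(\nl\alpha) = \bot$ the first implication is vacuous, while $\lambda(\nl\alpha) = \top$ and $back(\nl_k\n[k]{\alpha}) = \{k\}$ by~(\ref{back-eqn}), so the weak-nullability implication holds.

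The substantive work is in the two binary cases. For alternation $\alpha/\beta$, normalization gives $\n[k]{\alpha/\beta} = \n[k]{\alpha}/\n[k]{\beta}$, and the definitions yield $match(\n[k]{\alpha}/\n[k]{\beta}) = match(\n[k]{\beta})$ and $back(\cdots) = back(\n[k]{\alpha}) \cup back(\n[k]{\beta})$. For the match implication, if $\nu(\alpha/\beta) = \nu(\alpha)\lor\nu(\beta)$; here I must be careful, since $match$ of an alternation depends only on the second branch, so I would use the inductive hypothesis on $\beta$ together with Lemma~\ref{back-match-subset-lem} to handle the case where only $\nu(\alpha)$ holds. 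For the back implication I use $\lambda(\alpha/\beta) = \lambda(\alpha)\lor\lambda(\beta)$ together with Lemma~\ref{back-i-subset-lem} to show the union collapses to $\{k\}$. For the sequence case $\alpha\beta$, normalization gives $\n[k]{\alpha}\beta[\beta_k = \n[k]{\beta} \mbox{ if } \lambda(\beta)]$, and the definitions give $match(\cdots) = \cup_{j \in match(\n[k]{\alpha})} match(\beta_j)$ and $back(\cdots) = \cup_{j} back(\beta_j)$ over the follower list. Here $\nu(\alpha\beta) = \nu(\alpha)\land\nu(\beta)$ means I need $match(\n[k]{\alpha}) = \{k\}$ (inductive hypothesis on $\alpha$) so that the only follower consulted is $\beta_k = \n[k]{\beta}$, and then $match(\n[k]{\beta}) = \{k\}$ (inductive hypothesis on $\beta$). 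The main obstacle, and the place demanding the most care, is verifying that the follower $\beta_k = \n[k]{\beta}$ is actually present in the list: this is exactly the purpose of the guard ``if $\lambda(\beta)$'' in~(\ref{norm-eqn}), so I must invoke Lemma~\ref{nbl-look-lem} ($\nu(\beta) \Rightarrow \lambda(\beta)$) to guarantee the follower exists before I can apply the inductive hypothesis to it; and for the $\lambda$ implication the analogous reasoning uses $\lambda(\alpha)$ to ensure $k \in back(\n[k]{\alpha})$ so that $\beta_k$ is indeed indexed into, again relying on the $\lambda(\beta)$ guard for the follower's presence. Finally, I would remark that the corollary—that $\n{\bullet}$ respects the sequence normalization property~(\ref{seq-norm-eqn})—follows by the same case analysis, since the index of the single generated follower $\beta_k$ coincides with the sole element $k$ of $back(\n[k]{\alpha})$ guaranteed here.
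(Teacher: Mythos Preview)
Your plan is largely the same as the paper's: structural induction, with the base cases, $A$, $\nl\alpha$, and $\alpha\beta$ handled exactly as the paper does (including the use of Lemma~\ref{nbl-look-lem} to ensure the follower $\beta_k$ is present). The one place your plan diverges is the $\nu$/\emph{match} direction of the alternation case, and there you have a real gap.

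You correctly observe that $match(\n[k]{\alpha}/\n[k]{\beta}) = match(\n[k]{\beta})$ depends only on the second branch, so when $\nu(\alpha/\beta)$ holds via $\nu(\alpha)$ alone you cannot apply the inductive hypothesis to $\beta$. Your proposed fix, invoking Lemma~\ref{back-match-subset-lem} (and implicitly Lemma~\ref{back-i-subset-lem}), gives only $match(\n[k]{\beta}) \subseteq back(\n[k]{\beta}) \subseteq \{k\}$; it does not rule out $match(\n[k]{\beta}) = \{\}$. Indeed, without a further hypothesis the claim is simply false in this case: take $\varphi = \varepsilon / a$, so $\nu(\varphi) = \top$, yet $match(\n[k]{\varphi}) = match(a) = \{\}$. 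The paper closes this gap not with a lemma but with an \emph{assumption}: it invokes simplification rule~7 from Table~\ref{simpl-table} ($\eta/\beta \equiv \eta$ when $\nu(\eta)$), under which no surviving alternation $\alpha/\beta$ can have $\nu(\alpha)$; hence $\nu(\alpha/\beta) \Rightarrow \nu(\beta)$ and the inductive hypothesis on $\beta$ applies directly. You need to make that same appeal (or otherwise assume expressions are simplified); the route through Lemma~\ref{back-match-subset-lem} cannot succeed.
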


\begin{proof}
	By structural induction on $\varphi$.
	\begin{description}
		\item|Cases $a$ and $\varnothing$| Vacuously true.
		\item|Case $\varepsilon$| Follows from definitions.
		\item|Case $A$| $\n[k]{A}$ has a finite expansion [Theorem~\ref{norm-wf-thm}], 
			thus the induction terminates. From there, case follows 
			from inductive hypothesis.
		\item|Case $\nl\alpha$| $\nu$ statement vacuously true, $\lambda$ 
			statement follows from definitions.
		\item|Case $\alpha\beta$| $\lambda(\beta)$ implies $\beta_k$ defined 
			[def'n $\n[k]{\alpha\beta}$] and $back(\beta_k) = \{k\}$ 
			[ind. hyp.]; similarly 
			$\nu(\beta) \Rightarrow \lambda(\beta)$ [Lemma~\ref{nbl-look-lem}]
			and $\nu(\beta) \Rightarrow match(\beta_k)$ [ind. hyp.]. The rest follows from definitions.
		\item|Case $\alpha/\beta$| $\nu$ statement implicitly requires 
			application of the simplification rules in Table~\ref{simpl-table}; 
			particularly that $\neg\nu(\alpha)$ [by 7.]; $\lambda$ statement 
			follows from definitions.
	\end{description}
\end{proof}

\begin{corollary}
	\label{seq-norm-cor}
	$\n{\alpha\beta}$ respects (\ref{seq-norm-eqn}).
\end{corollary}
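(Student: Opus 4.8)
The plan is to obtain the corollary from Theorem~\ref{back-match-thm} (Nullability Equivalence). Normalizing a sequence produces the outermost expression $\n[i]{\alpha}\beta[\cdots]$, whose follower index set is either $\{i\}$ or $\{\}$ according to whether the lookahead follower at index $i$ is retained; (\ref{seq-norm-eqn}) demands that this set equal $back(\n[i]{\alpha})$. By Lemma~\ref{back-i-subset-lem} the right-hand side is itself a subset of $\{i\}$, so both sides range over exactly the same two values and the whole task reduces to deciding, on each side, whether the index $i$ is present; that is, to showing the follower is retained exactly when $i \in back(\n[i]{\alpha})$. Sequences nested more deeply are, by Lemma~\ref{norm-le-lem}, the outermost sequences of $\n[i]{\gamma}$ for proper subexpressions $\gamma$ of $\alpha\beta$, so a structural induction on $\alpha\beta$ disposes of them.

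The key step is thus the equivalence ``follower retained $\iff back(\n[i]{\alpha}) = \{i\}$''. The forward implication is supplied directly by Theorem~\ref{back-match-thm}: when $\alpha$ is weakly nullable, $back(\n[i]{\alpha})$ is forced up to the full singleton $\{i\}$, lining up with the retained-follower case. For the complementary case I would invoke the biconditional strengthening $back(\n[k]{\gamma}) = \{k\} \iff \lambda(\gamma)$, so that failure of $\lambda(\alpha)$ drives $back(\n[i]{\alpha})$ down to $\{\}$, lining up with the dropped-follower case. Granting this, the outermost follower set and $back(\n[i]{\alpha})$ agree in both cases, and the induction of the previous paragraph then closes off the nested sequences.

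The work concentrates on the converse implication $back(\n[k]{\gamma}) = \{k\} \Rightarrow \lambda(\gamma)$, which Theorem~\ref{back-match-thm} does not state and which I would add by re-running its structural induction as a biconditional. The awkward case is a sequence $\gamma = \gamma_1\gamma_2$, where (\ref{back-eqn}) computes $back$ from the retained followers rather than from $\gamma_1$ directly: that union is empty precisely when the follower at index $k$ is dropped, so the converse propagates only if dropping the follower coincides with $\neg\lambda(\gamma_1)$. Pinning the retention rule to weak nullability of the \emph{first} component --- so that it tracks $\lambda(\gamma_1)\wedge\lambda(\gamma_2)=\lambda(\gamma)$ --- is, I expect, the delicate heart of the argument, and the place where the proof must engage most directly with the definition of $\n{\bullet}$ rather than appealing to Theorem~\ref{back-match-thm} as a black box.
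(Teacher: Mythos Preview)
The paper offers no explicit proof beyond placing the corollary immediately after Theorem~\ref{back-match-thm}. Your analysis is more careful than that: you correctly observe that (\ref{seq-norm-eqn}) demands \emph{equality} of the follower-index set with $back(\n[i]{\alpha})$, so both directions of the equivalence $\lambda(\alpha) \Leftrightarrow back(\n[i]{\alpha}) = \{i\}$ are needed, while Theorem~\ref{back-match-thm} states only the forward one. Your plan to strengthen that theorem to a biconditional by re-running the same structural induction is exactly right and, together with Lemma~\ref{back-i-subset-lem}, closes the gap the paper leaves implicit.

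You have also put your finger on a real defect in the printed text: the retention condition in (\ref{norm-eqn}) is written $\lambda(\beta)$ but must be read as $\lambda(\alpha)$ for the corollary to hold at all --- with $\lambda(\beta)$, the pair $\alpha = a$, $\beta = \varepsilon$ gives follower set $\{i\}$ against $back(\n[i]{a}) = \{\}$. One small correction to your sequence-case analysis of the converse: the union over followers can also be empty when the follower \emph{is} retained but $back(\n[k]{\gamma_2}) = \{\}$, which by your inductive hypothesis happens exactly when $\neg\lambda(\gamma_2)$. Thus the sub-cases $\neg\lambda(\gamma_1)$ (follower dropped) and $\lambda(\gamma_1)\wedge\neg\lambda(\gamma_2)$ (follower retained, empty $back$) together exhaust $\neg\lambda(\gamma_1\gamma_2)$; once the retention rule is taken as $\lambda(\gamma_1)$, the case is routine rather than delicate.
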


To link up the normalization function with the idea of normalized parsing 
expressions, I prove the following link:

\begin{lemma}
	\label{k-norm-lem}
	For any well-formed, un-normalized parsing expression 
	$\varphi$, $\n[k]{\varphi}$ is $k$-normalized.
\end{lemma}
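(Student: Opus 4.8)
The plan is to prove the four defining conditions of $k$-normalization --- well-formedness, the absence of un-normalized material in the left-expansion (\ref{norm-eqn}), the sequence property (\ref{seq-norm-eqn}) for every sequence in the left-expansion, and the index bound (\ref{k-norm-eqn}) --- by structural induction on $\varphi$, taking the finiteness of $\n[k]{\varphi}$'s expansion (Theorem~\ref{norm-wf-thm}) as the guarantee that the induction bottoms out even across the nonterminal case $\n[k]{A} = \n[k]{\Rule(A)}$. The induction applies to the strictly smaller $\alpha$ and $\beta$ in each compound case, which remain well-formed and un-normalized since they lie in $SUB^+(\varphi)$. The single most useful preliminary observation is that the subscript is never altered by $\n{\bullet}$: every recursive invocation in (\ref{norm-eqn}) reuses the same index $i$, so every annotation $\varepsilon_j$, $\nl_j\alpha$, and follower index $j$ appearing anywhere in $\n[k]{\varphi}$ equals $k$. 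This discharges (\ref{k-norm-eqn}) at once and reduces the remaining work to showing that $\n[k]{\varphi}$ is merely \emph{normalized}.

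For the normalization conditions I would keep careful track of the distinction between the subexpression closure $SUB^+$ and the left-expansion closure $LE^+$, since (\ref{norm-eqn}) and (\ref{seq-norm-eqn}) are stated over $LE^+$ whereas well-formedness ranges over $SUB^+$. By Lemma~\ref{norm-le-lem} the expressions $\n{\bullet}$ descends through are exactly $LE^+(\varphi)$, and each is rewritten into indexed form ($\varepsilon \mapsto \varepsilon_k$, $\nl\alpha \mapsto \nl_k\n[k]{\alpha}$, $\alpha\beta \mapsto \n[k]{\alpha}\beta[\cdots]$) with every nonterminal unfolded via $\n[k]{A} = \n[k]{\Rule(A)}$; hence no bare $\varepsilon$, $A$, $\nl\alpha$, or $\alpha\beta$ survives in $LE^+(\n[k]{\varphi})$, establishing (\ref{norm-eqn}). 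The only un-normalized material left in $\n[k]{\varphi}$ is the bare successor $\beta$ of each sequence, which (\ref{extra-le-eqn}) places in $SUB$ but not in $LE$; as $\beta \in SUB^+(\varphi)$ it is itself un-normalized and index-free, so it neither breaks (\ref{k-norm-eqn}) nor contributes to the left-expansion conditions.

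The sequence property (\ref{seq-norm-eqn}) is where the argument does real work. For a top-level sequence $\n[k]{\alpha\beta}$ the required equality between its follower index set and $back$ of its first component is exactly Corollary~\ref{seq-norm-cor}; for every other sequence reached in $LE^+(\n[k]{\varphi})$ I would appeal to the inductive hypothesis on $\alpha$ and $\beta$, noting that each follower $\beta_k = \n[k]{\beta}$ is itself in the left-expansion by (\ref{extra-le-eqn}) and is $k$-normalized by that hypothesis. Well-formedness of $\n[k]{\varphi}$ follows along the same lines: normalization only unfolds nonterminals and attaches indices, introducing no new left-recursion, so by Lemma~\ref{norm-le-lem} any self-left-expansion in $\n[k]{\varphi}$ would trace back to one in the well-formed $\varphi$, a contradiction.

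I expect the main obstacle to be the bookkeeping around the sequence case rather than any deep difficulty: I must simultaneously invoke Corollary~\ref{seq-norm-cor} for the outermost sequence, route the inductive hypothesis through both the first component $\n[k]{\alpha}$ and the followers $\beta_k = \n[k]{\beta}$ that sit in the left-expansion, and verify that the bare successor $\beta$, which sits only in $SUB^+$ and not in $LE^+$, is exempt from (\ref{norm-eqn}) and (\ref{seq-norm-eqn}) while remaining index-free for (\ref{k-norm-eqn}). Keeping the $SUB^+$ versus $LE^+$ distinction straight throughout is the part most prone to error.
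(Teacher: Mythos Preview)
Your proposal is correct and follows essentially the same route as the paper: both arguments use Lemma~\ref{norm-le-lem} to establish that $\n[k]{\bullet}$ rewrites exactly $LE^+(\varphi)$, observe that every index introduced equals $k$ while the untouched successor $\beta$ carries no indices (since $\varphi$ is un-normalized), invoke Theorem~\ref{norm-wf-thm} for finiteness, and defer the sequence property (\ref{seq-norm-eqn}) to Corollary~\ref{seq-norm-cor}. Your version is simply more explicit---you spell out the structural induction and the $SUB^+$/$LE^+$ bookkeeping that the paper compresses into a direct two-sentence argument.
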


\begin{proof}
	By definition, $\nexists \varepsilon_j \mbox{ or } \nl_j\alpha \in SUB^+(\varphi)$; 
	particularly none exist such that $j > k$.
	By Lemma~\ref{norm-le-lem}, $\n[k]{\varphi}$ is applied to all of $LE^+(\varphi)$, 
	which by definition of $\n[k]{\bullet}$ replaces all of the $\varepsilon$, 
	$A$, and $\nl\alpha$ with $\varepsilon_k$, $\n[k]{\Rule(A)}$ and 
	$\nl_k\alpha$, respectively, satisfying the definitions of normalized and 
	$k$-normalized. Note $\n[k]{\Rule(A)}$ has a finite expansion [Theorem~\ref{norm-wf-thm}], 
	and (\ref{seq-norm-eqn}) is maintained [Corollary~\ref{seq-norm-cor}].
\end{proof}

With these initial results in place, I can now move on to proving the primary 
correctness result for the algorithm, as discussed in Section~\ref{deriv-sec}. 
The first step is to show that success ($\varepsilon_j$) and failure 
($\varnothing$) results persist for the rest of the string. I also show that 
the derivative step maintains the normalization property (\ref{seq-norm-eqn}), 
and that $back$ and $match$ have semantics matching their claimed meaning:

\begin{lemma}
	\label{preserve-lem}
	Given $\str = \strc{k+1}{k+2}{n}$, $d_{\str,k}(\varepsilon_j) = \varepsilon_j$
	and $d_{\str,k}(\varnothing) = \varnothing$.
\end{lemma}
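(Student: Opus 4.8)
The plan is to prove both claims by a straightforward induction on the length of the string $\str$, reducing the multi-character string derivative $d_{\str,k}$ to repeated applications of the single-character derivative step $d_{c,i}$ defined in (\ref{derivative-eqn}). The base case is the empty string, where $d_{\mt,k}$ is the identity and both equalities hold trivially. For the inductive step, it suffices to establish the single-step facts $\dv{\varepsilon_j} = \varepsilon_j$ and $\dv{\varnothing} = \varnothing$ for an arbitrary character $\chs{c}$ and index $i$, since the composition in (\ref{str-deriv-eqn}) then preserves each result at every step.

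Concretely, the single-step facts are immediate from the relevant clauses of (\ref{derivative-eqn}): the second line gives $\dv{\varepsilon_j} = \varepsilon_j$ directly, and the third line gives $\dv{\varnothing} = \varnothing$ directly. The only subtlety is that the index on $\varepsilon_j$ is unchanged by the derivative step, so the same annotated success expression $\varepsilon_j$ reappears after each character is consumed, and likewise $\varnothing$ is a fixed point of the step function. First I would state these two single-step identities, then argue that applying $d_{s_{k+1},k+1}$ to $\varepsilon_j$ yields $\varepsilon_j$ again, so that $\str$'s derivative is obtained by iterating a map that fixes $\varepsilon_j$; an identical argument handles $\varnothing$.

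I would write this as a short induction on $n$, the length of $\str$. Assuming $d_{\str',k}(\varepsilon_j) = \varepsilon_j$ for the shorter suffix $\str' = \strc{k+2}{k+3}{n}$, one applies the outermost derivative step $d_{s_{k+1},k+1}$ and invokes the single-step identity to conclude $d_{\str,k}(\varepsilon_j) = \varepsilon_j$; the $\varnothing$ case is symmetric. Because the single-step derivative of $\varepsilon_j$ simply returns $\varepsilon_j$ regardless of $\chs{c}$ and of the index $i$, no index bookkeeping is required, and the induction goes through cleanly.

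There is no genuine obstacle here: this lemma is really a direct reading of the definitions, and its role is purely to justify the short-circuiting remark following (\ref{str-deriv-eqn}). If anything merits a sentence of care, it is confirming that the derivative step never rewrites a bare $\varepsilon_j$ or $\varnothing$ into some other form via an enclosing context; but since the lemma concerns these expressions in isolation (not as subexpressions of a larger sequence or alternation), the matching clauses of (\ref{derivative-eqn}) apply verbatim and the result is immediate.
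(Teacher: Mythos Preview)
Your proposal is correct and follows essentially the same approach as the paper, which simply says the result follows from the definitions of $\dv{\varepsilon_j}$ and $\dv{\varnothing}$ applied inductively. One minor slip: in the composition (\ref{str-deriv-eqn}) the step $d_{s_{k+1},k+1}$ is the \emph{innermost} (first applied) rather than the outermost, and the inductive hypothesis should be stated for $d_{\str',k+1}$ rather than $d_{\str',k}$; but as you yourself note, the single-step identities hold for all characters and indices, so this bookkeeping is immaterial and the argument goes through.
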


\begin{proof}
	Follows directly from definitions of $\dv{\varepsilon_j}$ and \linebreak{}$\dv{\varnothing}$, applied inductively over $k$ decreasing from $n$.
\end{proof}

\begin{lemma}
	\label{deriv-back-lem}
	For any $(i-1)$-normalized expression $\varphi$, \linebreak
	$back(\dv{\varphi}) \subseteq back(\varphi) \cup \{i\}$.
\end{lemma}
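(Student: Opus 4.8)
The plan is to argue by induction on the structure of $\varphi$, using Theorem~\ref{norm-wf-thm} to guarantee a finite expansion so the induction is well founded, with induction hypothesis that the stated containment holds for every expression in $SUB(\varphi)$. The base cases $a$, $\varepsilon_j$, $\varnothing$ read straight off (\ref{derivative-eqn}) and (\ref{back-eqn}): $\dv{a}$ is $\varepsilon_i$ or $\varnothing$, so its $back$ is $\{i\}$ or $\{\}$, while $\dv{\varepsilon_j}=\varepsilon_j$ and $\dv{\varnothing}=\varnothing$ preserve $back$ exactly. The lookahead case $\nl_j\alpha$ is equally direct, since each branch of $\dv{\nl_j\alpha}$ is $\varnothing$, $\varepsilon_j$, or $\nl_j\dv{\alpha}$, each of which has $back$ equal to $\{\}$ or $\{j\}=back(\nl_j\alpha)$. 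For $\alpha/\beta$ I would apply the induction hypothesis to both branches; since $back(\alpha/\beta)=back(\alpha)\cup back(\beta)$, each of the three subcases of $\dv{\alpha/\beta}$ (namely $\dv{\beta}$, $\dv{\alpha}$, and $\dv{\alpha}/\dv{\beta}$) lands inside $back(\alpha/\beta)\cup\{i\}$.

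The substantive case is $\genseq$, and the organizing observation is that, because $\varphi$ is normalized, (\ref{seq-norm-eqn}) gives $\{i_1,\dots,i_k\}=back(\alpha)$, while the induction hypothesis applied to the subexpression $\alpha$ gives $back(\dv{\alpha})\subseteq back(\alpha)\cup\{i\}=\{i_1,\dots,i_k\}\cup\{i\}$. The subcase $\dv{\alpha}=\varnothing$ is trivial, and $\dv{\alpha}=\varepsilon_i\wedge c\neq\eos$ yields $\n{\beta}$, for which Lemma~\ref{back-i-subset-lem} gives $back(\n{\beta})\subseteq\{i\}$. In the subcase $\dv{\alpha}=\varepsilon_j\wedge j<i$ the result is $\dv{\beta_j}$; here $\{j\}=back(\dv{\alpha})$, so the observation forces $j\in\{i_1,\dots,i_k\}$, hence $\beta_j\in SUB(\varphi)$ by (\ref{extra-sub-eqn}) and the induction hypothesis yields $back(\dv{\beta_j})\subseteq back(\beta_j)\cup\{i\}\subseteq back(\varphi)\cup\{i\}$. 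The general subcase builds $\dv{\alpha}\beta[\beta^\dagger_j:j\in back(\dv{\alpha})]$, whose $back$ is $\bigcup_{j\in back(\dv{\alpha})}back(\beta^\dagger_j)$ by (\ref{back-eqn}); for the follower at $j=i$ we have $\beta^\dagger_i=\n{\beta}$ with $back\subseteq\{i\}$ (Lemma~\ref{back-i-subset-lem}), and for each $j<i$ the observation again places $j$ in $\{i_1,\dots,i_k\}$, so $\beta_j\in SUB(\varphi)$ and the induction hypothesis bounds $back(\beta^\dagger_j)=back(\dv{\beta_j})$ by $back(\beta_j)\cup\{i\}\subseteq back(\varphi)\cup\{i\}$. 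Taking the union closes this subcase.

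The remaining subcase, $\dv{\alpha}=\varepsilon_i\wedge c=\eos$, is the one I expect to be the genuine obstacle: its result $\dv{\n{\beta}}$ is a derivative of the freshly normalized $\n{\beta}$, which is not a member of $SUB(\varphi)$, so the induction hypothesis for $\varphi$ does not reach it, and a naive nested structural induction on $\n{\beta}$ would reproduce exactly the same $c=\eos$ difficulty one level down. The plan is therefore to recast the entire argument as a well-founded induction, proved for all $i$-normalized expressions at once, over a measure $\mu$ that dominates both the $SUB$-relation and the step $\genseq\mapsto\n{\beta}$. Such a $\mu$ is available because, by Lemma~\ref{norm-le-lem}, the expansion that drives $\dv{\n{\beta}}$ is $LE^+(\beta)\subseteq SUB^+(\varphi)$, and well-formedness guarantees $\varphi\notin LE^+(\beta)$, placing $\n{\beta}$ strictly below $\genseq$ in the finite expansion order of Theorem~\ref{norm-wf-thm}. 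Under this measure the $c=\eos$ branch is just another appeal to the induction hypothesis, giving $back(\dv{\n{\beta}})\subseteq back(\n{\beta})\cup\{i\}\subseteq\{i\}$ by Lemma~\ref{back-i-subset-lem}, hence $\subseteq back(\varphi)\cup\{i\}$. The $(i-1)$-normalized hypothesis of the statement is then just the special case the main correctness theorem supplies; it is used in the body only through (\ref{seq-norm-eqn}) and to keep every newly introduced index equal to $i$.
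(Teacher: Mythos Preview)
Your proposal is correct and matches the paper's approach: structural induction on $\varphi$, with Lemma~\ref{back-i-subset-lem} supplying the key bound $back(\n{\beta})\subseteq\{i\}$ in the sequence case. The paper's proof is two sentences and does not isolate the $c=\eos$ subcase or the well-foundedness issue you raise; your more careful treatment of that branch (recasting the induction over a measure that also decreases along $\genseq\mapsto\n{\beta}$) is a genuine elaboration of a point the paper leaves implicit.
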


\begin{proof}
	By structural induction on $\varphi$. In the $\genseq$ case
	$back(\n{\beta}) \subseteq \{i\}$ by Lemma~\ref{back-i-subset-lem}.
\end{proof}

\begin{corollary}
	$\dv{\genseq}$ maintains (\ref{seq-norm-eqn}).
\end{corollary}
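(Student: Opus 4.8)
The plan is to argue by cases on the definition of $\dv{\genseq}$ in (\ref{derivative-eqn}), under the hypothesis that the input already satisfies (\ref{seq-norm-eqn}), so that $\{i_1, \ldots, i_k\} = back(\alpha)$. The first four cases produce $\varnothing$, $\n{\beta}$, $\dv{\n{\beta}}$, or $\dv{\beta_j}$; of these, $\varnothing$ is not a sequence expression and so satisfies (\ref{seq-norm-eqn}) vacuously, $\n{\beta}$ carries whatever follower annotation normalization assigns it (governed by Corollary~\ref{seq-norm-cor}), and the remaining two are derivatives of already-normalized subexpressions rather than freshly constructed sequence annotations. The real content therefore lies in the final ``otherwise'' case.

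In that case the result is the sequence expression $\dv{\alpha}\beta\left[\beta^\dagger_j : j \in back(\dv{\alpha})\right]$, whose first component is $\dv{\alpha}$ and whose lookahead-follower indices form exactly the set $back(\dv{\alpha})$. Since (\ref{seq-norm-eqn}) demands precisely that the follower index set equal $back$ of the first component, the property holds immediately by construction; no separate computation of $back(\dv{\alpha})$ is needed to establish it.

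The one loose end, which I expect to be the only thing genuinely requiring an argument, is that every $\beta^\dagger_j$ named in this list is actually well-defined. For $j = i$ we have $\beta^\dagger_i = \n{\beta}$, which always exists. For $j < i$ we have $\beta^\dagger_j = \dv{\beta_j}$, which is defined only when $\beta_j$ is a follower of the input, \ie{} $j \in \{i_1, \ldots, i_k\}$. Here I would invoke Lemma~\ref{deriv-back-lem} to obtain $back(\dv{\alpha}) \subseteq back(\alpha) \cup \{i\}$: any $j \in back(\dv{\alpha})$ with $j < i$ must then lie in $back(\alpha)$, which by the input's (\ref{seq-norm-eqn}) equals $\{i_1, \ldots, i_k\}$, so $\beta_j$ is indeed present and $\dv{\beta_j}$ is well-defined. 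This containment is exactly what Lemma~\ref{deriv-back-lem} was established to supply, so I anticipate no real obstacle; the only care needed is to confirm that none of the four non-trivial cases introduces a fresh sequence annotation lying outside the scope of Corollary~\ref{seq-norm-cor}.
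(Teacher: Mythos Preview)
Your proposal is correct and matches the paper's intended argument: the paper states this as an immediate corollary of Lemma~\ref{deriv-back-lem} with no separate proof, and your unpacking---that the ``otherwise'' case has follower index set equal to $back(\dv{\alpha})$ by construction, with Lemma~\ref{deriv-back-lem} guaranteeing every $\beta^\dagger_j$ with $j<i$ is well-defined because $back(\dv{\alpha})\setminus\{i\}\subseteq back(\alpha)=\{i_1,\ldots,i_k\}$---is exactly the content the paper leaves implicit. Your case analysis of the non-``otherwise'' branches is more thorough than anything the paper provides, but not in tension with it.
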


\begin{lemma}
	\label{match-lem}
	For a normalized parsing expression $\varphi$ such that $match(\varphi) \neq \{\}$, 
	and a string $\str = \strc{k}{k+1}{n}$, $d_{\str,k}(\varphi) = \varepsilon_\ell$ 
	for some $\ell \leq n+1$.
\end{lemma}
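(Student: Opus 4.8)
The plan is to reduce the whole-string claim to a single-character invariant and iterate it along $\str$. First I would apply Lemma~\ref{match-size-lem} to write $match(\varphi) = \{m\}$ as a singleton, and recall that (as in the recognition procedure of Section~\ref{deriv-sec}) the input is terminated by the end marker, i.e.\ $s_n = \eos$; this terminator is what ultimately forces a nullable expression to resolve to a concrete match $\varepsilon_\ell$ rather than lingering as a partially-consumed expression such as a derived $a*$. The induction is on $|\str|$. Peeling the leading character, I would write $d_{\str,k}(\varphi) = d_{\str',k}\!\left(d_{s_k,k}(\varphi)\right)$ with $\str' = \strc{k+1}{k+2}{n}$ one character shorter, and dispatch on the shape of $d_{s_k,k}(\varphi)$: if it is $\varepsilon_\ell$, then Lemma~\ref{preserve-lem} carries it unchanged to the end; otherwise a one-step invariant reproduces the hypotheses and the length induction applies, with the terminator step $c=\eos$ as the base case.

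The heart of the argument is that one-step invariant: for normalized $\psi$ with $match(\psi)\neq\{\}$ and a non-terminator character, $\dv{\psi}$ is either $\varepsilon_\ell$ or is again normalized with $match(\dv{\psi})\neq\{\}$. I would prove it by structural induction on $\psi$ over the cases of (\ref{derivative-eqn}). The $a$, $\varnothing$ and $\nl$ cases have empty $match$ and are vacuous, and $\varepsilon_j$ is fixed by its derivative. For $\alpha/\beta$, since $match(\alpha/\beta) = match(\beta)$ the hypothesis gives $match(\beta)\neq\{\}$, and the inductive hypothesis on $\beta$ in particular forbids $\dv{\beta}=\varnothing$; the three branches of the alternation derivative then respectively return $\dv{\beta}$, return a $\dv{\alpha}$ with nonempty $match$, or return $\dv{\alpha}/\dv{\beta}$ whose $match$ equals $match(\dv{\beta})$, and the hypothesis on $\beta$ closes each (normalization being preserved by the corollary to Lemma~\ref{deriv-back-lem}).

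The sequence case $\genseq$ is the delicate one. Here $match(\genseq)\neq\{\}$ forces $match(\alpha)=\{m\}$ and $match(\beta_m)\neq\{\}$; by Theorem~\ref{back-match-thm} the latter is exactly the nullability of $\beta$, so $match(\n{\beta})\neq\{\}$ for the freshly normalized successor. The branch producing $\dv{\beta_j}$ descends into a follower $\beta_j$, which by (\ref{seq-norm-eqn}) and Lemma~\ref{back-match-subset-lem} is a subexpression of $\genseq$ and so admits the structural hypothesis; the branch producing $\dv{\alpha}\beta[\cdots]$ uses the hypothesis on $\alpha$ together with the transfer of $match$ through the surviving follower; and the branch producing $\n{\beta}$ (non-terminator $c$) needs no recursion, only the $match(\n{\beta})\neq\{\}$ just established. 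The main obstacle is the terminator branch $\dv{\n{\beta}}$, which drives the base case ($c=\eos$) and spoils naive structural induction: $\n{\beta}$ is a newly built normalized successor, not a subterm of $\genseq$, so the structural hypothesis cannot be applied to it. To discharge this I would prove the base-case collapse — for normalized $\psi$ with $match(\psi)\neq\{\}$, $d_{\eos,i}(\psi)=\varepsilon_\ell$ — by well-founded induction on the finite, acyclic left-expansion structure guaranteed for well-formed expressions by Theorem~\ref{norm-wf-thm}. The measure decreases because the jump into $\dv{\n{\beta}}$ occurs only after $\alpha$ has matched empty, so that (\ref{le-eqn}) places the successor strictly deeper in the left-expansion, and well-formedness forbids this chain of terminator-forced jumps from cycling; the chain therefore terminates in some $\varepsilon_\ell$.

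Finally, the bound $\ell\leq n+1$ follows by tracking indices: every $\varepsilon$- and $\nl$-index introduced by normalization is the current position (Lemma~\ref{back-i-subset-lem}), and a single derivative step raises indices by at most the current position $i\leq n$ (Lemma~\ref{deriv-back-lem}), so the surviving match index cannot exceed the terminator position $n+1$. I expect the literal and alternation cases of the one-step invariant to be routine bookkeeping; the genuinely hard, non-structural step is establishing termination of the terminator-forced sequence jumps via the left-expansion measure, which is where the well-formedness of the grammar is essential.
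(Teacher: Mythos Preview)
Your plan diverges sharply from the paper: the paper dispatches this lemma in two lines by structural induction on $\varphi$, citing Theorem~\ref{norm-wf-thm} for finiteness and writing only ``inductive hypothesis'' for the sequence and alternation cases, whereas you set up an outer induction on string length driven by a one-step invariant.

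The trouble is that your one-step invariant is false. Take $\psi = (a/\varepsilon_0)\,b\,[\beta_0 = \varepsilon_1]$, which actually arises as $d_{b,1}\bigl(\n[0]{(ba/\varepsilon)\,b}\bigr)$: here $match(a/\varepsilon_0) = \{0\}$ and $match(\beta_0) = \{1\}$, so $match(\psi) = \{1\} \neq \{\}$. Now compute $d_{a,2}(\psi)$. Since $d_{a,2}(a) = \varepsilon_2$ has nonempty $match$, the alternation rule gives $d_{a,2}(a/\varepsilon_0) = \varepsilon_2$; this is $\varepsilon_i$ with $c \neq \eos$, so the sequence rule returns $\n[2]{b} = b$. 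But $match(b) = \{\}$ and $b$ is not an $\varepsilon_\ell$, so the invariant fails --- and indeed continuing with any character other than $\chs{b}$ sends the derivative to $\varnothing$, not $\varepsilon_\ell$.

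The specific mistake is your appeal to Theorem~\ref{back-match-thm} to conclude that ``$match(\beta_m) \neq \{\}$ \ldots\ is exactly the nullability of $\beta$''. That theorem gives only the forward implication $\nu(\beta) \Rightarrow match(\n[k]{\beta}) = \{k\}$, and in any case $\beta_m$ is a \emph{derivative} of the normalized successor, not the normalization itself. A non-nullable $\beta$ can perfectly well have a derived follower with nonempty $match$: above, $\beta = b$ has $\nu(b) = \bot$, yet $\beta_0 = \varepsilon_1$ after consuming $\chs{b}$. Hence you cannot establish $match(\n{\beta}) \neq \{\}$ in the $\dv{\alpha} = \varepsilon_i$ branch of the sequence case, and your length-induction step collapses exactly there. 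The paper's bare structural induction does not route through this invariant at all, so whatever argument the author has in mind for the sequence case, it is not the one you have sketched.
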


\begin{proof}
	By structural induction on $\varphi$; by Theorem~\ref{norm-wf-thm} (Finite Expansion) the 
	expression is finite and thus admits structural induction.
	\begin{description}
		\item|$a$, $\varnothing$, and $\nl_j\alpha$| vacuously satisfied.
		\item|$\varepsilon_j$| Lemma~\ref{preserve-lem}.
		\item|$\genseq$ and $\alpha/\beta$| Inductive hypothesis.
	\end{description}
\end{proof}

With these lemmas established, Theorem~\ref{main-thm} shows the main result, that the derivative step possesses the expected 
semantics, while Theorem~\ref{term-thm} demonstrates how the derivative 
$d_{\eos,n}$ with respect to the end-of-string terminator fulfills the role 
typically served by a nullability combinator $\delta$ in other derivative 
parsing formulations\cite{Brz64,MDS11,GJWE18}.

\begin{theorem}[Derivative Step]
	\label{main-thm}
	For any well-formed, $k$-normalized parsing expression $\varphi$ and any string 
	$\str = \linebreak \strc{k+1}{k+2}{k+n}$, 
	$\varphi(\str) = d_{\strc{k+1}{k+2}{k+m},k}(\varphi)(\substr{k+m})$ 
	for all $m \leq n$.
\end{theorem}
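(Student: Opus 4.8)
The plan is to fix $\varphi$ and $\str$ and induct on $m$, peeling off one input character and one derivative step at a time. The base case $m = 0$ is immediate: the empty-string derivative $d_{\strc{k+1}{k+2}{k},k}$ is the identity, so the right-hand side collapses to $\varphi(\str)$. For the inductive step I would set $\psi_m = d_{\strc{k+1}{k+2}{k+m},k}(\varphi)$ and observe that $\psi_{m+1} = d_{s_{k+m+1},k+m+1}(\psi_m)$, so that the claim at $m+1$ follows from the claim at $m$ once I establish the single-character commutation
\[ \psi_m(\substr{k+m}) = d_{s_{k+m+1},k+m+1}(\psi_m)(\substr{k+m+1}). \]
A prerequisite is that $\psi_m$ is itself $(k+m)$-normalized: $\varphi$ is $k$-normalized by hypothesis, and Lemma~\ref{deriv-back-lem} together with its corollary show that each derivative step raises the normalization index by exactly one while preserving (\ref{seq-norm-eqn}). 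Hence the displayed commutation is precisely a statement about applying $\dv{\bullet}$ to an $(i-1)$-normalized expression, with $i = k+m+1$ and $c = s_{k+m+1}$.

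The heart of the argument is then to prove this commutation by structural induction on $\psi$, with Theorem~\ref{norm-wf-thm} (Finite Expansion) guaranteeing well-foundedness and the case split following the derivative definition (\ref{derivative-eqn}). The $\varepsilon_j$ and $\varnothing$ cases are immediate from Lemma~\ref{preserve-lem}, since by (\ref{extra-expr-eqn}) these expressions ignore their input entirely (both sides evaluate to $\substr{j+1}$, resp.\ $\fail$); this is exactly why the precise input suffix is irrelevant once parsing has resolved. The literal case $a$ is a direct computation against (\ref{expr-eqn}). The alternation case $\alpha/\beta$ uses the inductive hypothesis on $\dv{\alpha}$ and $\dv{\beta}$ together with Lemma~\ref{match-lem} and Lemma~\ref{match-size-lem} to justify that ordered choice commits irrevocably to $\alpha$ exactly when $match(\dv{\alpha}) \neq \{\}$, so that the three branches of $\dv{\alpha/\beta}$ reproduce the non-backtracking semantics of (\ref{expr-eqn}).

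I expect the two real obstacles to be the lookahead case $\nl_j\alpha$ and, above all, the sequence case $\genseq$. For $\nl_j\alpha$ I would use Lemma~\ref{match-lem} and Theorem~\ref{back-match-thm} to translate the predicate $match(\dv{\alpha}) \neq \{\}$ into the assertion that $\alpha$ matches the consumed prefix, so that the three branches of $\dv{\nl_j\alpha}$ (failing when $\alpha$ matches, succeeding as $\varepsilon_j$ when $\dv{\alpha} = \varnothing$, and otherwise carrying the negation inward) reconstruct the negation semantics; the delicate point is that a successful lookahead must \emph{un-consume} back to index $j$, which is exactly what the $\varepsilon_j$ result encodes and where the note that $\str$ may be a suffix of $\substr{j}$ becomes essential. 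The sequence case is the crux: its five branches must be shown to hand control correctly from $\alpha$ to $\beta$. The easy sub-case $\dv{\alpha} = \varnothing$ propagates failure; the sub-cases $\dv{\alpha} = \varepsilon_i$ invoke Theorem~\ref{norm-thm} (Normalization) to identify the freshly introduced $\n{\beta}$ with $\beta$ at the current index. The hardest sub-case is $\dv{\alpha} = \varepsilon_j$ with $j < i$, where $\alpha$ has matched having un-consumed input back to index $j$ and parsing of $\beta$ must resume from that earlier index through the follower $\dv{\beta_j}$; here I would rely on the normalization invariant $\{i_1,\dots,i_k\} = back(\alpha)$ of (\ref{seq-norm-eqn}), maintained by the corollary to Lemma~\ref{deriv-back-lem}, to guarantee that the required follower $\beta_j$ is in fact present in the list. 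Threading the lookahead-follower list $[\beta^\dagger_j : j \in back(\dv{\alpha})]$ correctly through the final ``otherwise'' branch, and re-verifying that it continues to satisfy (\ref{seq-norm-eqn}), is where I anticipate the bulk of the bookkeeping, and it is the step most likely to require care to get the index accounting exactly right.
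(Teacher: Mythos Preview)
Your plan is essentially the paper's: outer induction on $m$, inner structural induction on the expression for the single-character commutation, with the same case split along (\ref{derivative-eqn}) and the same supporting lemmas. The only structural difference is that you peel off the \emph{last} character (reducing $m+1$ to $m$ via $\psi_{m+1}=d_{s_{k+m+1},k+m+1}(\psi_m)$), whereas the paper peels off the \emph{first} (setting $\gamma'=d_{s_{k+1},k+1}(\gamma)$ and then invoking the outer hypothesis on $\varphi'$, which is $(k{+}1)$-normalized). This is cosmetic. Your appeal to Theorem~\ref{back-match-thm} in the $\nl_j\alpha$ case is a slight misfire, since that theorem concerns un-normalized expressions; Lemma~\ref{match-lem}, which you also cite, is what actually does the work there, exactly as in the paper.

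There is one genuine soft spot. In the sequence sub-case $\dv{\alpha}=\varepsilon_j$ with $j<i$, you argue via (\ref{seq-norm-eqn}) that the follower $\beta_j$ is \emph{present}; but what the step needs is the \emph{semantic} identity $\beta_j(\substr{i-1})=\beta(\substr{j})$, so that handing control to $\dv{\beta_j}$ really resumes $\beta$ at index $j$. The inner structural hypothesis on $\beta_j$ only gives the one-step commutation $\beta_j(\substr{i-1})=\dv{\beta_j}(\substr{i})$, not the full rewind. The paper closes this by what it calls ``backward application of case~5 of sequence derivative'': it unfolds $\beta_j$ as the iterated derivative of $\n[j]{\beta}$ (which is how case~5 introduces and maintains followers) and then applies the \emph{outer} hypothesis on $m$ to the different expression $\n[j]{\beta}$, which is $j$-normalized by Lemma~\ref{k-norm-lem}, concluding via Theorem~\ref{norm-thm}. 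For that move to be available, the outer induction on $m$ must be universally quantified over the expression and the base index, not run for a fixed $\varphi$ and $\str$ as you propose; otherwise the hypothesis cannot be instantiated at $\n[j]{\beta}$. Dropping ``fix $\varphi$ and $\str$'' and quantifying instead is the only repair your outline needs.
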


\begin{proof}
	By induction on $m$. $m = 0$ is true by identity, and the inductive step is 
	shown by structural induction on $\varphi$. 
	Let $\str' = \substr{k+1}$ and for any parsing 
	expression $\gamma$ let $\gamma' = d_{\chs{s}_{k+1},k+1}(\gamma)$.
	Note that by Lemma~\ref{preserve-lem}, if 
	$\varphi' \in \{ \varepsilon_j, \varnothing \}$ then 
	$d_{\str,k}(\varphi) = \varphi'$. 
	Also note that by Lemma~\ref{match-lem} and the inductive hypothesis, 
	$match(\varphi') \neq \{\} \Rightarrow \varphi'(\str') = \substr{\ell} \Rightarrow \varphi(\str) = \substr{\ell}$.
	\begin{description}
		\item|$a$| If $\chs{s}_{k+1} = \chs{a}$, $a' = \varepsilon_{k+1}$ 
			and $a(\str) = \str'$, otherwise $a' = \varnothing$ and 
			$a(\str) = \fail$.
		\item|$\varepsilon_j$ and $\varnothing$| By Lemma~\ref{preserve-lem}.
		\item|$\nl_j\alpha$| By structural induction, 
		$\alpha(\str) = \alpha'(\str')$. 
			\begin{itemize}
				\item If $match(\alpha') \neq \{\}$, 
				$\varphi' = \varnothing \Rightarrow \varphi'(\str') = \fail$ and 
				$\alpha(\str) = \substr{\ell} \Rightarrow \varphi(\str) = \fail$.
				\item If $\alpha' = \varnothing$, 
				$\varphi' = \varepsilon_j \Rightarrow \varphi'(\str') = \substr{j}$ 
				and also $\alpha(\str) = \fail \Rightarrow \varphi(\str) = \substr{j}$.
				\item Otherwise $\varphi(\str) = \varphi'(\str')$ by 
				$\alpha(\str) = \alpha'(\str')$.
			\end{itemize}
		\item|$\alpha/\beta$| By similar argument to $\nl_j\alpha$
		\item|$\genseq$| By structural induction, $\alpha(\str) = \alpha'(\str')$.
			\begin{itemize}
				\item If $\alpha' = \varnothing$, $\varphi'(\str') = \alpha'(\str') = \fail$ and 
				also $\varphi(\str) = \alpha(\str) = \fail$. 
				\item If $\alpha' = \varepsilon_{k+1}$, 
				$\varphi'(\str') = \n[k+1]{\beta}(\str') = \beta(\str')$ [Theorem~\ref{norm-thm} (Normalization)]
				and $\alpha(\str) = \str' \therefore \varphi(\str) = \beta(\str')$.
				\item If $\alpha' = \varepsilon_j, j \leq k$, 
				$\alpha'(\str') = \alpha(\str) = \substr{j}$ and 
				$\varphi(\str) = \beta(\substr{j})$; 
				$\varphi'(\str') = \beta_j'(\str') = \n[j]{\beta}(\substr{j})$ 
				[backward application of case 5 of sequence derivative]. 
				Note that $back(\varepsilon_j) = \{j\}$ so 
				by repeated application of Lem-\linebreak{}ma~\ref{deriv-back-lem}, $\beta_j$ is defined.
				$\varphi'(\str') = \beta(\substr{j})$ [Theorem~\ref{norm-thm} (Normalization)].
				\item Otherwise $\varphi(\str) = \varphi'(\str')$ by structural induction: 
				$\alpha(\str) = \alpha'(\str')$, $\beta_j(\str) = \beta_j'(\str)$, and 
				$\beta(\str) = \beta_{k+1}^\dagger(\str')$ by Theorem~\ref{norm-thm} (Normalization) if 
				applicable.
			\end{itemize}
	\end{description}
\end{proof}

\begin{theorem}[Derivative Completion]
	\label{term-thm}
	For any string $\str = \strc{1}{2}{n} \in \Sigma^*$ and any 
	$k$-normalized expression $\varphi$, 
	$\varphi(\mt) = \substr{\ell}$ iff 
	$d_{\eos,k}( \varphi ) = \varepsilon_\ell, \ell \leq k$ and 
	$\varphi(\mt) = \fail$ iff $d_{\eos,k}( \varphi ) = \varnothing$.
\end{theorem}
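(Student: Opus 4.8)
The plan is to prove both biconditionals simultaneously by structural induction on $\varphi$, paralleling the case structure of Theorem~\ref{main-thm} (Derivative Step) but specialised to the pseudo-character $\eos \notin \Sigma$ and the empty suffix $\mt$. Since $\varphi$ is $k$-normalized it is well-formed, so by Theorem~\ref{norm-wf-thm} (Finite Expansion) it has a finite expansion and the induction is bounded. The crucial observation, which I would fold directly into the inductive invariant, is that $d_{\eos,k}(\varphi)$ always resolves to either $\varepsilon_\ell$ (for some $\ell \le k$) or $\varnothing$, and is never left in an unresolved form such as $\nl_j(\cdot)$, $(\cdot)/(\cdot)$, or $(\cdot)\beta[\cdots]$. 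This is exactly what makes the two stated biconditionals jointly exhaustive: once $d_{\eos,k}(\varphi)$ is known to lie in $\{\varepsilon_\ell\} \cup \{\varnothing\}$ and each outcome is matched to its semantic counterpart ($\varepsilon_\ell \leftrightarrow \substr{\ell}$ and $\varnothing \leftrightarrow \fail$), both the forward and reverse directions follow at once, because $\varphi(\mt) \in \Sigma^* \cup \{\fail\}$ is itself a strict dichotomy.

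The base cases are immediate. For $a$, since $\eos \neq \chs{a}$ we have $d_{\eos,k}(a) = \varnothing$, matching $a(\mt) = \fail$; for $\varepsilon_j$ the derivative is $\varepsilon_j$ itself and $\varepsilon_j(\mt) = \substr{j}$ with $j \le k$ by $k$-normalization; for $\varnothing$ both sides fail. For the lookahead case $\nl_j\alpha$ and the alternation case $\alpha/\beta$, the key point is that the inductive hypothesis already forces $d_{\eos,k}(\alpha) \in \{\varepsilon_\ell\} \cup \{\varnothing\}$, whence $match(d_{\eos,k}(\alpha))$ is $\{\ell\}$ or $\{\}$ respectively. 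Consequently the ``otherwise'' branch of each derivative, the only branch that could emit an unresolved expression, cannot fire, which discharges the strengthened invariant for these forms. The surviving branches track the semantics directly: $\nl_j\alpha$ returns $\varepsilon_j$ exactly when $d_{\eos,k}(\alpha) = \varnothing$, i.e.\ when $\alpha(\mt) = \fail$ so that $\nl_j\alpha(\mt) = \substr{j}$, and fails exactly when $match(d_{\eos,k}(\alpha)) \neq \{\}$, i.e.\ when $\alpha$ matches $\mt$ (Lemma~\ref{match-lem}); the alternation reduces to whichever branch matches, in the correct priority order, by the inductive hypotheses on $\alpha$ and $\beta$.

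The sequence case $\genseq$ is where I expect the real work, and it is the main obstacle. Again $d_{\eos,k}(\alpha) \in \{\varepsilon_\ell\} \cup \{\varnothing\}$ by induction, so the ``otherwise'' branch never fires and exactly one of three situations obtains. If $d_{\eos,k}(\alpha) = \varnothing$ then $\alpha(\mt) = \fail$ and hence $\alpha\beta(\mt) = \fail$. If $d_{\eos,k}(\alpha) = \varepsilon_k$, so $\alpha$ matched at the current end-of-input position leaving the empty suffix $\mt$, the derivative reduces to $d_{\eos,k}(\n[k]{\beta})$, and I would invoke Theorem~\ref{norm-thm} (Normalization) to identify $\n[k]{\beta}(\mt) = \beta(\mt)$ before applying the inductive hypothesis to $\n[k]{\beta}$, which is $k$-normalized (Lemma~\ref{k-norm-lem}) and of finite expansion, exactly as in the corresponding case of Theorem~\ref{main-thm}. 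If instead $d_{\eos,k}(\alpha) = \varepsilon_j$ with $j < k$, so $\alpha$ matched at an earlier backtracking point leaving the nonempty suffix $\substr{j}$, the derivative reduces to the stored follower $d_{\eos,k}(\beta_j)$, and the subtlety is to show that this follower faithfully encodes $\beta$ resumed on $\substr{j}$, so that $\beta_j(\mt) = \beta(\substr{j}) = \alpha\beta(\mt)$; here the follower is guaranteed present by Lemma~\ref{deriv-back-lem} together with the sequence-normalization property, and its semantic fidelity follows from repeated application of Theorem~\ref{norm-thm} (Normalization), after which the inductive hypothesis on $\beta_j$ closes the case. The difficulty concentrated in this case is precisely that the $\eos$-derivative must correctly complete a suspended parse of $\beta$ against two different suffixes, the current end suffix via $\n[k]{\beta}$ and each backtracking suffix via its follower $\beta_j$, and verifying that both encodings agree with the direct semantics $\alpha\beta(\mt) = \beta(\alpha(\mt))$ is the crux that ties together Normalization, the follower bookkeeping, and the index sets $match$ and $back$.
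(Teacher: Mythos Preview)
Your proposal follows essentially the same structural-induction argument as the paper, with the same case split in the sequence case and the same appeals to Lemma~\ref{k-norm-lem} and Theorem~\ref{norm-thm} in the $j = k$ subcase. Your explicit strengthening of the inductive invariant---that $d_{\eos,k}(\varphi)$ always lands in $\{\varepsilon_\ell\}\cup\{\varnothing\}$, so the ``otherwise'' branches never fire---is a helpful clarification that the paper leaves implicit in its remark that $match(\varepsilon_\ell) = \{\ell\} \neq \{\} = match(\varnothing)$.

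One correction is needed in the $j < k$ subcase of the sequence case. You attribute the semantic fidelity $\beta_j(\mt) = \beta(\substr{j})$ to ``repeated application of Theorem~\ref{norm-thm} (Normalization)'', but Normalization only gives $\n[j]{\beta}(\substr{j}) = \beta(\substr{j})$. The stored follower $\beta_j$ is not $\n[j]{\beta}$ itself; it is the repeated derivative of $\n[j]{\beta}$ over the characters at indices $j{+}1,\dots,k$, so passing from $\n[j]{\beta}(\substr{j})$ to $\beta_j(\mt)$ requires Theorem~\ref{main-thm} (Derivative Step), not Normalization. The paper cites exactly Theorem~\ref{main-thm} at this point. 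With that citation swapped, your argument matches the paper's.
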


\begin{proof}
	By structural induction on $\varphi$.
	\begin{description}
		\item|Case $a$| $\forall a \in \Sigma, \eos \neq a$ 
			$\therefore d_{\eos,k}(a) = \varnothing$ and $a(\mt) = \fail$.
		\item|Case $\varepsilon_j$ and $\varnothing$| By definitions.
		\item|Cases $\nl_j\alpha$ and $\alpha/\beta$| Follow directly from 
			the inductive hypothesis and definitions; note 
			$match(\varepsilon_\ell) = \{\ell\} \neq \{\} = match(\varnothing)$. \\
		\item|Case $\genseq$| 
			\begin{itemize}
				\item If $\alpha$ does not match $\mt$, then 
				$\alpha\beta(\mt) = \fail$ and 
				$d_{\eos,k}( \alpha ) = \varnothing$ by the inductive 
				hypothesis $\therefore d_{\eos,k}( \varphi ) = \varnothing$.
				\item If $\alpha(\mt) = \substr{j}$, then 
				$\alpha\beta(\mt) = \beta(\substr{j})$ and 
				$d_{\eos,k}( \alpha ) = \varepsilon_j$.
				\begin{itemize}
					\item[\textperiodcentered] If $j = k$, $d_{\eos,k}( \varphi ) = d_{\eos,k}( \n[k]{\beta} )$; 
					by Theorem~\ref{norm-thm} (Normalization), $\n[k]{\beta}(\mt) = \beta(\mt) = \beta(\substr{j})$, 
					and by Lemma~\ref{k-norm-lem}, $\n[k]{\beta}(\mt)$ is 
					$k$-normalized, therefore the inductive hypothesis applies. 
					\item[\textperiodcentered] If $j < k$, $d_{\eos,k}( \varphi ) = d_{\eos,k}( \beta_j )$; $\beta_j(\mt) = \beta(\substr{j})$ [Theorem~\ref{main-thm} (Derivative Step)].
				\end{itemize}
			\end{itemize}
	\end{description}
\end{proof}

\begin{theorem}[Derivative Correctness]
	\label{correct-thm}
	For any string $\str = \strc{1}{2}{n} \in \Sigma^*$ and 
	well-formed, un-normalized expression 
	$\varphi$, $\varphi(\str) = \substr{\ell}$ iff 
	$\varphi^{(n)} = \varepsilon_\ell$ and $\varphi(\str) = \fail$ iff 
	$\varphi^{(n)} = \varnothing$, where 
	$\varphi^{(n)} = d_{\eos,n} \circ d_{\str,0}( \n[0]{\varphi} )$.
\end{theorem}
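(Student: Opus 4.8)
The plan is to prove the theorem by composing the three principal results already in hand --- Normalization (Theorem~\ref{norm-thm}), the Derivative Step (Theorem~\ref{main-thm}), and Derivative Completion (Theorem~\ref{term-thm}) --- so that the whole algorithm $\varphi^{(n)} = d_{\eos,n} \circ d_{\str,0}(\n[0]{\varphi})$ is shown to compute exactly $\varphi(\str)$. Concretely, I would establish the chain of equalities
\[ \varphi(\str) \;=\; \n[0]{\varphi}(\str) \;=\; d_{\str,0}(\n[0]{\varphi})(\mt), \]
and then read off the value $\varphi(\str)$ from the end-of-string derivative $d_{\eos,n}$.

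First I would invoke Theorem~\ref{norm-thm} (Normalization) with $k = 0$, which applies since $\varphi$ is well-formed and un-normalized, to get $\varphi(\str) = \n[0]{\varphi}(\str)$; Lemma~\ref{k-norm-lem} then certifies that $\n[0]{\varphi}$ is $0$-normalized. Next I would apply Theorem~\ref{main-thm} (Derivative Step) with $k = 0$ and $m = n$ to the $0$-normalized $\n[0]{\varphi}$, obtaining $\n[0]{\varphi}(\str) = d_{\str,0}(\n[0]{\varphi})(\substr{n})$, where $\substr{n} = \mt$ is the empty suffix remaining once all $n$ characters of $\str$ have been consumed. Writing $\psi = d_{\str,0}(\n[0]{\varphi})$, these two steps reduce the claim about $\varphi(\str)$ to the value of $\psi(\mt)$.

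To finish, I would apply Theorem~\ref{term-thm} (Derivative Completion) to $\psi$ with $k = n$: it gives $\psi(\mt) = \substr{\ell}$ (with $\ell \leq n$) iff $d_{\eos,n}(\psi) = \varepsilon_\ell$, and $\psi(\mt) = \fail$ iff $d_{\eos,n}(\psi) = \varnothing$. Since $d_{\eos,n}(\psi) = d_{\eos,n}\!\left(d_{\str,0}(\n[0]{\varphi})\right) = \varphi^{(n)}$ by definition, combining this with the two equalities above yields both biconditionals: $\varphi(\str) = \substr{\ell}$ iff $\varphi^{(n)} = \varepsilon_\ell$, and $\varphi(\str) = \fail$ iff $\varphi^{(n)} = \varnothing$.

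The step I expect to be the main obstacle is discharging the hypothesis of Theorem~\ref{term-thm} that $\psi$ be $n$-normalized; this is not a single cited fact but must be propagated across the $n$ derivative steps. I would argue by induction on $i$ that $d_{\strc{1}{2}{i},0}(\n[0]{\varphi})$ is $i$-normalized, the base case $i = 0$ being Lemma~\ref{k-norm-lem}. For the inductive step I would observe that $d_{s_i,i}$, applied to an $(i-1)$-normalized expression, introduces no annotation with index exceeding $i$ --- the only fresh annotations are the index-$i$ match $\varepsilon_i$ of the literal case and the expansion $\n[i]{\beta}$ of the sequence case, the latter $i$-normalized by Lemma~\ref{k-norm-lem} --- that it creates no un-normalized subexpression, and that it preserves the sequence-normalization property~(\ref{seq-norm-eqn}) by the corollary to Lemma~\ref{deriv-back-lem}, whose index bound on $back$ keeps all lookahead-follower indices within range. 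With this bookkeeping in place the hypotheses of all three invoked theorems are met, and the remainder of the argument is the routine substitution sketched above.
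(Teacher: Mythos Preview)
Your proposal is correct and follows essentially the same route as the paper: invoke Theorem~\ref{norm-thm} and Lemma~\ref{k-norm-lem} to pass to $\n[0]{\varphi}$, apply Theorem~\ref{main-thm} to reduce to $d_{\str,0}(\n[0]{\varphi})(\mt)$, and finish with Theorem~\ref{term-thm}. Your final paragraph, arguing inductively that each derivative step preserves $i$-normalization so that Theorem~\ref{term-thm}'s hypothesis is met, is a point the paper's terse proof leaves implicit; you are being more careful there, not different.
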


\begin{proof}
	By Theorem~\ref{norm-thm} (Normalization) $\varphi(\str) = \n[0]{\varphi}(\str)$ and by 
	Lemma~\ref{k-norm-lem} $\n[0]{\varphi}$ is $0$-normalized, 
	$\therefore \varphi(\str) = d_{\str,0}( \n[0]{\varphi} )(\mt)$ by Theorem~\ref{main-thm} (Derivative Step);
	the result follows from Theorem~\ref{term-thm}.
\end{proof}

\section{Analysis} \label{analysis-sec}

In \cite{Mos17}, Moss demonstrated the polynomial worst-case space and time of 
his algorithm with an argument based on bounds on the depth and fanout of the 
DAG formed by his derivative expressions. These bounds, cubic space and quartic 
time, were improved to constant space and linear time for a broad class of 
``well-behaved'' inputs with constant-bounded backtracking and depth of 
recursive invocation. This paper includes a similar analysis of the algorithm 
presented herein, improving the worst-case bounds of the previous algorithm by 
a linear factor, to quadratic space and cubic time, while maintaining the 
optimal constant space and linear time \linebreak bounds for the same class of 
``well-behaved'' inputs.


For an input string of length $n$, the algorithm runs $O(n)$ derivative steps; 
the cost of each derivative step $\dv{\varphi}$ is the sum of the cost of 
(\ref{derivative-eqn}) on each expression node in $LE^+(\varphi)$. 
Since by convention the size of the grammar is a constant, all operations on 
any expression $\gamma$ from the original grammar (particularly $\n{\gamma}$) 
run in constant time and space. It can be observed from the derivative step 
and normalization equations (\ref{derivative-eqn}) and (\ref{match-eqn}) that 
once the appropriate subexpression derivatives have 
been calculated, the cost of a derivative step on a single expression node 
$\delta$ is $O(|LE(\delta)|)$. Let $b$ be the maximum $|LE(\delta)|$ over all 
$\delta \in LE^+(\varphi)$; by Lemmas~\ref{back-i-subset-lem} 
and~\ref{deriv-back-lem}, $b \in O(n)$. Assuming $\n{\bullet}$ is memoized for 
each $i$, only a constant number of expression nodes may be added to the 
expression at each derivative step, ergo $|LE^+(\varphi)| \in O(n)$. By this 
argument, the derivative parsing algorithm presented here runs in $O(n^2)$ 
worst-case space and $O(n^3)$ worst-case time, improving the previous space and 
time \linebreak bounds for derivative parsing of PEGs by a linear factor each. This linear 
improvement over the algorithm presented in \cite{Mos17} is due to the new 
algorithm only storing $O(b)$ backtracking information in sequence nodes, 
rather than $O(b^2)$ as in the previous algorithm.

In practical use, the linear time and constant space results presented 
in \cite{Mos17} for inputs with constant-bounded backtracking and grammar 
nesting (a class that includes most source code and structured data) also 
hold for this algorithm. If $b$ is bounded by a constant rather than its linear 
worst-case, the bounds discussed above are reduced to linear space and 
quadratic time. Since $b$ is a bound on the size of $LE(\varphi)$, it can be 
seen from the left-expansion equations (\ref{le-eqn}) and (\ref{extra-le-eqn}) 
that this is really a bound 
on sequence expression backtracking choices, which existing work including 
\cite{Mos17} has shown is often bounded by a constant in practical use.

Given that the bound on $b$ limits the fanout of the derivative expression DAG, 
a constant bound on the depth of that DAG implies that the overall size of the 
DAG is similarly constant-bounded. Intuitively, the bound on the depth of the 
DAG is a bound on recursive invocations of a nonterminal by itself, applying a 
sort of ``tail-call optimization'' for right-recursive invocations such as 
$R_{\alpha*} := \alpha\:R_{\alpha*}\;/\;\varepsilon$. The conjunction of both 
of these bounds defines the class of ``well-behaved'' PEG inputs introduced by 
Moss in \cite{Mos17}, and by the constant bound on derivative DAG size this 
algorithm also runs in constant space and linear time on such inputs.

\section{Experimental Results}

In addition to being easier to explain and implement than the previous 
derivative parsing algorithm, the simplified \linebreak parsing expression derivative 
presented here also has superior runtime performance.

To test this performance, the new simplified parsing expression derivative 
(SPED) algorithm was compared against the parser-combinator-based recursive 
descent (Rec.) and packrat (Pack.) parsers used in \cite{Mos17}, as 
well as the parsing expression derivative (PED) implementation from that 
paper. The same set of XML, JSON, and Java inputs and grammars used in 
\cite{Mos17} are also used here. Code and test data are available 
online\cite{Mos14}. All tests were compiled with g++ 6.2.0 and run on a machine 
with 8GB of RAM, a dual-core 2.6 GHz processor, and SSD main storage.

Figure~\ref{runtime-fig} shows the runtime of all four algorithms on all three 
data sets, plotted against the input size; Figure~\ref{mem-use-fig} shows the 
memory usage of the same runs, also plotted against the input size, but on a 
log-log scale.

\begin{figure}
	\centering
	\include{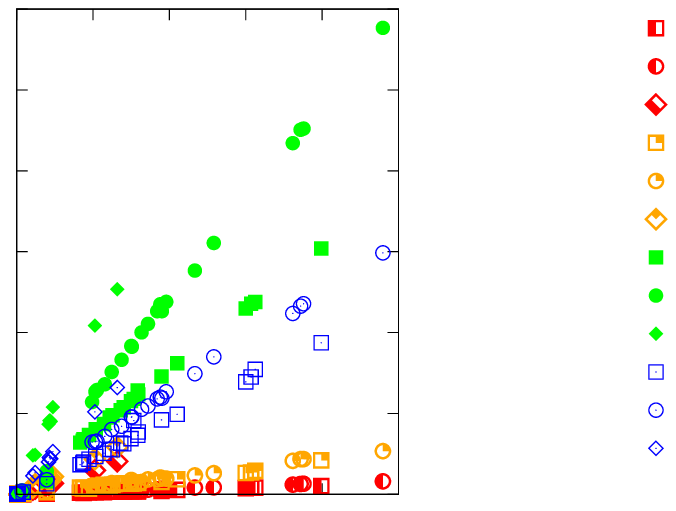}
	\caption[Algorithm Runtime]{Algorithm runtime with respect to input size; lower is better.}
	\label{runtime-fig}
\end{figure}
	
\begin{figure}
	\centering
	\include{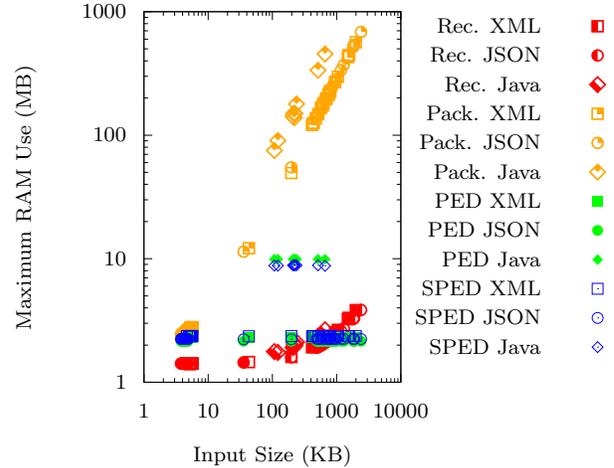}
	\caption[Algorithm Memory Use]{Maximum algorithm memory use with respect to input size; lower is better.}
	\label{mem-use-fig}
\end{figure}

Contrary to its poor worst-case asymptotic performance, the recursive descent 
algorithm is actually best in practice, running most quickly on all tests, and 
using the least memory on all but the largest inputs (where the derivative 
parsing algorithms' ability to not buffer input gives them an edge). Packrat 
parsing is consistently slower than recursive descent, while using two orders 
of magnitude more memory. The two derivative parsing algorithms have 
significantly \linebreak slower runtime performance, and memory usage closer to recursive 
descent than packrat.


Though on these well-behaved inputs all four algorithms run in linear time and 
space (constant space for the derivative parsing algorithms), the constant 
factor differs by both algorithm 
and grammar complexity. The XML and JSON grammars are of similar complexity, 
with 23 and 24 nonterminals, respectively, and all uses of lookahead 
expressions $\nl\alpha$ and $\&\alpha$ eliminated by judicious use of the more 
specialized negative character class, end-of-input, and until expressions 
described in \cite{Mos17}. It is consequently unsurprising that the parsers 
have similar runtime performance on those two grammars. By contrast, the Java 
grammar is significantly more complex, with 178 nonterminals and 54 lookahead 
expressions, and correspondingly poorer runtime performance.

\begin{figure}
	\centering
	\include{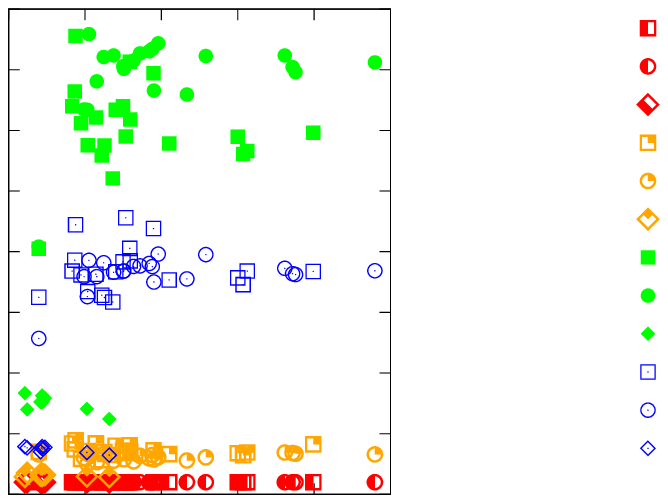}
	\caption[Algorithm Slowdown]{Algorithm slowdown versus recursive-descent with respect to input size; lower is better.}
	\label{speedup-fig}
\end{figure}

Both the packrat algorithm and the derivative parsing algorithm presented here 
trade increased space usage for better runtime. Naturally, this trade-off works 
more in their favour for more complex grammars, particularly those with more 
lookahead expressions, as suggested by Moss\cite{Mos17}. \linebreak Grouping the broadly 
equivalent XML and JSON tests together and comparing mean speedup, recursive 
descent is 3.3x as fast as packrat and 18x as fast as SPED on XML and JSON, yet 
only 1.6x as fast as packrat and 3.7x as fast as SPED for Java. Packrat's 
runtime advantage over SPED also decreases from 5.5x to 2.3x between XML/JSON 
and Java. As can be seen from Figure~\ref{speedup-fig}, these speedup 
numbers are quite consistent across input sizes on a per-grammar basis.

Though the packrat algorithm is a modest constant factor faster than 
the derivative parsing algorithm across the test suite, it uses as much as 300x 
as much peak memory on the largest test cases, with the increases scaling 
linearly in the input size. Derivative parsing, by contrast, maintains a 
grammar-dependent constant memory usage across all the (well-behaved) inputs 
tested. This constant memory usage is within a factor of two on either side of 
the memory usage of the recursive descent implementation on all the XML and 
JSON inputs tested, and 3--5x more on the more complex Java grammar. The higher 
memory usage on Java is likely due to the lookahead expressions, which are 
handled with runtime backtracking in recursive descent, but extra 
concurrently-processed expressions in derivative parsing.

Derivative parsing in general is known to have poor runtime 
performance\cite{MDS11,AHM16}, as these results also demonstrate. However, this  
new algorithm does provide a significant improvement on the current state of 
the art for parsing expression derivatives, with a 40\% speedup on XML and 
JSON, a 50\% speedup on Java, and an up to 13\% decrease in memory usage. 
This improved performance may be beneficial for use cases that specifically 
require the derivative computation, such as the modular parsers of 
Brachth\"{a}user~\etal\cite{BRO16} or the sentence generator of 
Garnock-Jones~\etal\cite{GJWE18}.

\section{Related Work}

A number of other recognition algorithms for parsing expression grammars have 
been presented in the literature. Ford\cite{For02} introduced both the PEG 
formalism and the recursive descent and packrat algorithms. Medeiros and \linebreak
Ierusalimschy\cite{MI08} have developed a parsing machine for PEGs, similar in 
concept to a recursive descent parser, but somewhat faster in practice. 
Mizushima~\etal\cite{MMY10} have demonstrated the use of manually-inserted 
``cut operators'' to trim memory usage of packrat parsing to a constant, while 
maintaining the asymptotic worst-case bounds; Kuramitsu\cite{Kur15} and 
Redziejowski\cite{Red07} have built modified packrat parsers which use heuristic 
table-trimming mechanisms to achieve similar real-world performance without 
manual grammar modifications, but which sacrifice the polynomial worst-case 
runtime of the packrat algorithm. Henglein and Rasmussen\cite{HR17} have proved 
linear worst-case time and space bounds for their progressive tabular parsing 
algorithm, with some evidence of constant space usage in practice for a simple 
JSON grammar, but their lack of empirical comparisons to other algorithms makes 
it difficult to judge the practical utility of their approach.

Brzozowski\cite{Brz64} introduced derivative parsing for regular expressions; 
Might~\etal\cite{MDS11} extended this to context-free grammars. 
Adams~\etal\cite{AHM16} later improved the CFG derivative of Might~\etal{}, 
proving the same cubic worst-case time bound as shown here for PEG derivatives. 
Garnock-Jones \etal\cite{GJWE18} have posted a preprint of another derivative 
parsing algorithm for PEGs; their approach elegantly avoids defining new 
parsing expressions through use of a \emph{nullability combinator} to represent 
lookahead followers as later alternatives of an alternation expression. 
However, unlike this work, their paper lacks a both a proof of correctness and 
empirical performance results.

Along with defining the PEG formalism, Ford showed a number of fundamental 
theoretical results, notably that PEGs can represent any $LR(k)$ 
language\cite{For02}, and that it is undecidable in general whether the 
language of an arbitrary parsing expression is empty\cite{For04}, and by 
corollary that the equivalence of two arbitrary parsing expressions is 
undecidable. These undecidability results constrain the precision of any 
nullability function for PEGs, proving to also be a limit on the functions 
$\nu$ and $\lambda$ defined in this work. Redziejowski has also done 
significant theoretical work in 
analysis of parsing expression grammars, notably his adaptation of the $FIRST$ 
and $FOLLOW$ sets from classical parsing literature to \linebreak PEGs\cite{Red09}.

\section{Conclusion and Future Work}

This paper has proven the correctness of a new derivative parsing algorithm for 
PEGs based on the previously published algorithm in \cite{Mos17}. Its key contributions are 
simplification of the earlier algorithm through use of global numeric indices 
for backtracking choices, a proof of algorithmic correctness formerly absent 
for all published PEG derivative algorithms, and empirical comparison of this 
new algorithm to previous work. The proof of correctness includes an extension 
of the concept of \emph{nullability} from previous literature in derivative 
parsing\cite{Brz64,MDS11} to PEGs, while respecting existing undecidability 
results for PEGs\cite{For04}. The simplified algorithm also improves the 
worst-case space and time bounds of the previous algorithm by a linear factor. 

While extension of this recognition algorithm to a parsing algorithm remains 
future work, any such extension may rely on the fact that successfully 
recognized parsing expressions produce a $\varepsilon_e$ expression in this 
algorithm, where $e$ is the index at which the last character was consumed. As 
one approach, $\n[b]{\bullet}$ might annotate parsing expressions with $b$, the 
index at which they began to consume characters. By collecting subexpression 
matches and combining the two indices $b$ and $e$ on a successful match, this 
algorithm should be able to return a parse tree on match, rather than simply 
a recognition decision. The parser derivative approach of 
Might~\etal\cite{MDS11} may be useful here, with the added simplification that 
PEGs, unlike CFGs, have no more than one valid parse tree, and thus do not need 
to store multiple possible parses in a single node.

\bibliographystyle{plain}
\bibliography{peg_deriv}

\end{document}